\newcommand{\be}{\begin{equation}}
\newcommand{\ee}{\end{equation}}
\newcommand{\calG}{\mathcal{G}(\omega)}
\newcommand{\Id}{\mathbb{I}}
\newtheorem{theorem}{Theorem}
\newtheorem{lemma}[theorem]{Lemma}
\title{Approximating dynamical correlation functions with constant depth quantum circuits}
\date{June 6, 2024}
\author[1,2]{Reinis Irmejs}
\email{reinis.irmejs@mpq.mpg.de}
\author[1]{Raul A. Santos}
\email{raul@phasecraft.io}
\affil[1]{Phasecraft Ltd.}
\affil[2]{Max-Planck-Institut für Quantenoptik, Hans-Kopfermann-Straße 1, D-85748 Garching, Germany}
\begin{document}

\maketitle
\begin{abstract}

One of the most important quantities characterizing the microscopic properties of quantum systems are dynamical correlation functions. These correlations are obtained by time-evolving a perturbation of an eigenstate of the system, typically the ground state. In this work, we study approximations of these correlation functions that do not require time dynamics. We show that having access to a circuit that prepares an eigenstate of the Hamiltonian, it is possible to approximate the dynamical correlation functions up to exponential accuracy in the complex frequency domain $\omega=\Re(\omega)+i\Im(\omega)$, on a strip above the real line $\Im(\omega)=0$. 
We achieve this by exploiting the continued fraction representation of the dynamical correlation functions as functions of frequency $\omega$, where the level $k$ approximant can be obtained by measuring a weight $O(k)$ operator on the eigenstate of interest. In the complex $\omega$ plane, we show how this approach allows to determine approximations to correlation functions with accuracy that increases exponentially with $k$.

We analyse two algorithms to generate the continuous fraction representation in scalar or matrix form, starting from either one or many initial operators. We prove that these algorithms generate an exponentially accurate approximation of the dynamical correlation functions on a region sufficiently far away from the real frequency axis. We present numerical evidence of these theoretical results through simulations of small lattice systems. We comment on the stability of these algorithms with respect to sampling noise in the context of quantum simulation using quantum computers.

\end{abstract}

\tableofcontents

\section{Introduction}

A natural application of quantum computers is in the simulation of quantum systems. The resources needed to encode the Hilbert space, which scales exponentially with the system size, prohibit a direct classical simulation of these quantum systems. Of particular interest is the access to the time correlation functions: 
\be\label{eq:time_corr}
C(t)={\rm Tr}(\rho A(t) B) \quad \mbox{for $t>0$},
\ee
where $\rho$ is a state of the system and $A,B$ are operators describing some physical process. These correlation functions relate the response of a quantum system to external changes through the Kubo formula \cite{kubo1957statistical}, and are determined by its microscopic properties.

In physics, the dynamical correlation functions of this type are called Many-Body Green's functions (MBGF) \footnote{This name comes from the fact that in non-interacting systems, the time-ordered correlator of bosons (fermions) satisfies the usual differential equation of a Green's function. Specifically, the time-ordered correlation is defined as $-i\mathcal{T}\langle\psi_{\beta}(t)\psi_{\gamma}^{\dagger}\rangle:=\Theta(t)\langle\psi_{\beta}(t)\psi_{\gamma}^{\dagger}\rangle\pm\Theta(-t)\langle\psi^{\dagger}_{\gamma}\psi_{\beta}(t)\rangle$, where $\Theta(t)$ is the Heaviside step function, and the  $+$ ($-$) sign corresponds to $\psi$ being a bosonic (fermionic). If the time evolution of $\psi$ is generated by a quadratic Hamiltonian $H=\sum_{\alpha\beta} h_{\alpha\beta}\psi^\dagger_\alpha\psi_\beta$, then $-i\mathcal{T}\langle\psi_{\beta}(t)\psi_{\gamma}^{\dagger}\rangle$ satisfies
\be
\sum_\beta\left(i\frac{d}{dt}\delta_{\alpha\beta}-h_{\alpha\beta}\right)(-i\mathcal{T}\langle\psi_{\beta}(x,t)\psi_{\gamma}^{\dagger}(0)\rangle)=\delta(t)\delta_{\alpha\gamma}
\ee
}
 Knowing the MBGFs for $A$ and $B$ single boson (fermion) operators allows to obtain the ground state energy and momentum distribution of the particles \cite{abrikosov1997locally, fetter2012quantum}. The MBGFs also contain information about the excitation spectrum of the system, which can be seen using the Lehmann representation \cite{Kallen1952, lehmann1954eigenschaften} (see also discussion in \cref{sec:structure}). In classical simulation methods, MBGFs are a central object of investigation in embedded approaches to correlated systems. Dynamical mean field theory (DMFT) \cite{georges1996dynamical, kotliar2006electronic, aoki2014nonequilibrium}, one of the more advanced classical methods to study the behaviour of correlated materials, heavily hinges on the possibility of computing the time-dependent MBGFs. 

 Several algorithms have been proposed to simulate the MBGFs on quantum computers. These algorithms can largely be divided into 2 groups - those that perform direct time evolution and those based on subspace expansion methods. The approaches belonging to the first group involve appending an ancilla qubit to the system of interest and using the Hadamard test to access the real and imaginary parts of the MBGF obtained by computing the time-evolved state of the system subject to the state of the ancilla \cite{Bauer2016, Kreula2016, Kosugi2020, Ciavarella2020}. Thus, such a circuit requires both controlled time evolution and state preparation. To lower this high circuit cost and make these methods more suitable for near-term quantum devices, one can approximate the time evolution with variational circuits to reduce the depth as well as reduce the number of operators that need to have control \cite{Endo2020, Chen2021, Cruz2022, Sakurai2022, Libbi2022}. While being able to lower the necessary circuit depth, the variational approaches come at the price of dramatically increasing the number of necessary circuit runs to perform the optimization. Furthermore, the variational approaches lack the theoretical guarantees of faithfully approximating the dynamics. While at a lower cost than the direct approaches, we note that such algorithms still have a high cost of controlled operations on top of the already expensive ground state preparation, making them unlikely candidates for calculating the MBGFs on near-term devices.

The second strategy relies on calculating MBGFs directly in the frequency domain using Lehmann representation \cite{lehmann1954eigenschaften}. In this approach, one must calculate the overlaps of the operator of interest acting on the state $\rho$ of the system with each energy eigenstate (see \cref{sec:structure}). In practice, this approach is inefficient since one must calculate exponentially many overlaps. To make these methods feasible in the near term, one can use several approximations that allow one to estimate a small number of overlaps at the cost of limiting the accuracy of the MBGFs. In \cite{Jamet2022a}, the use of Krylov states instead of the excitations of the ground state was considered as it allows to more systematically truncate the number of the states needed. In this algorithm, the Krylov states are prepared with variational quantum algorithms (VQA), promising a faster MBGF convergence compared to excited states. In \cite{CQ_Lanczos}, authors use the Lanczos scheme to perform the subspace expansion. In their approach, they need to calculate the $\bra{\text{GS}}H^n\ket{\text{GS}}$ at increasing powers of $n$. This leads to the proliferation of, in general, exponentially many geometrically non-local observables as $n$ increases, which need to be evaluated. Small system investigations in \cite{CQ_Lanczos, Claudino_2021} have indicated that the proliferation of observables could saturate earlier and exhibit a sub-exponential scaling with $n$ for some practically relevant problems. The upside of the above approaches is that they can be implemented with much shallower circuits, as one only needs to perform measurements on the ground state in the respective basis. This leads to these approaches being more suitable for simulations in near-term quantum devices, as one only needs to perform very few operations on the hard-to-prepare ground-state approximations. These recursion methods have a long history in physics, for excellent resources on these approaches please see \cite{viswanath1994recursion,nandy2024quantum}.

In this work, we introduce an algorithm that belongs to the latter strategy. Here, we devise a method to classically obtain measurements that need to be performed on the state of interest $\rho$ to obtain an approximation of the desired accuracy of the MBGF. Furthermore, our algorithm guarantees the accuracy obtained with respect to the subspace expansion level chosen. The main result of this work, in its simplest formulation, is given by \cref{theo:approx_corr} below
and the associated extension presented in \cref{theo:approx_corr_mat} together with the algorithm of \cref{sec:algorithm}.
\begin{theorem}[Approximation of a correlation function]
\label{theo:approx_corr}
Given a quantum state $\rho$ and a Hamiltonian $H$ that commutes with $\rho$, the difference between the correlator $\mathcal{G}(\omega)=\int_0^\infty dt e^{i\omega t} {\rm Tr}(\rho A(t)A^\dagger)$ defined in the upper complex plane ${\rm Im}(\omega)>0$ and its $n$-th level continued fraction $\mathcal{G}_n(\omega)$ is bounded by 
\be\label{eq:approximation}
|\mathcal{G}({\omega})-\mathcal{G}_{n}({\omega})|\leq\frac{r}{1-2r}|\Gamma^{(n)}|\left|r(1-r)\frac{\Gamma^{(n-1)}}{\Gamma^{(n)}}\right|^{n+\frac{1}{2}},
\ee
for any $\omega$ such that ${\rm Im}(\omega)\geq\left[\frac{|\Gamma^{(n)}|}{|\Gamma^{(n-1)}|r(1-r)}\right]^{1/2}:= \Lambda$. Here $r$ is a free parameter $r\in(0,\frac{1}{2})$, $\Gamma^{(n)}={\rm Tr}(\rho A^{(n)}A^{(n)\dagger})$ and 
\be
A^{(n)}=[H,A^{(n-1)}]-\sum_{j=0}^{n-1}\frac{{\rm Tr}(\rho[H,A^{(n-1)}]A^{(j)\dagger})}{{\rm Tr}(\rho A^{(j)}A^{(j)\dagger})}A^{(j)}\quad \mbox{with}\quad A^{(0)}:= A.
\ee
The $n$-th level continued fraction $\mathcal{G}_n(\omega)$ is defined as the composition of M\"{o}bius maps
\be\label{eq:Mobius}
i\frac{\mathcal{G}_n(\omega)}{{\rm Tr}(\rho AA^\dagger)}= s_0^\omega \circ s_1^\omega \dots \circ s_{n-1}^\omega\circ s_n^\omega(0)\quad \mbox{with} \quad s^\omega_n(z)=\frac{-(\Gamma^{(n)})^2}{z+\Gamma^{(n)}\omega + \Delta^{(n)}},
\ee
and $\Delta^{(n)}={\rm Tr}(\rho [H,A^{(n)}]A^{(n)\dagger})$. This, by virtue of being a M\"{o}bius transformation, retains the Nevalinna structure \cite{nevanlinna1922asymptotische, Luger2014} of $\mathcal{G}(\omega)$. 
\end{theorem}
See \cref{app:Mobius_Nevalina} for more details on M\"{o}bius maps and how they preserve the Nevalinna structure of a function.

\begin{figure}
    \centering
    \includegraphics[width=\linewidth]{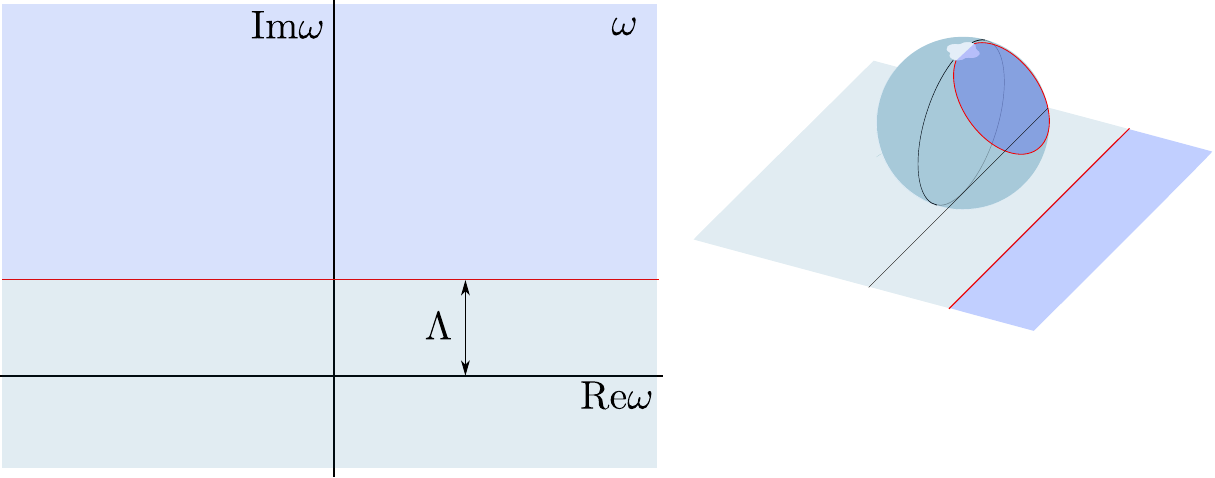}
    \caption{Left. The continued fraction approximation $\mathcal{G}_n(\omega)$ converges to $\mathcal{G}(\omega)$ exponentially fast with $n$ in a region above the real axis (light purple). Right. The M\"{o}bius maps are defined in the extended frequency plane with the point at infinity, that naturally maps into the Riemann sphere through a stereographic projection. The strip above the real axis in the plane becomes a cap in the sphere that contains the north pole.}
    \label{fig:Region_convergence}
\end{figure}

Notably, in the region ${\rm Im} (\omega)>\Lambda$ (see (\cref{fig:Region_convergence})), the accuracy of the approximation \cref{eq:approximation} increases exponentially with each level $n$.
Combining this approach with measurement techniques like classical shadows \cite{Huang_2020, Huang2021b, Zhao_2021} allows to construct $\mathcal{G}(\omega)$ in that region up to error $\epsilon$ in \cref{eq:approximation} using only a number of measurements that is polynomial in $1/\epsilon$.

We introduce two versions of this algorithm based on an expansion from a single and several operators. In particular, the single-operator algorithm is simpler but only allows calculating the time correlators of the form $\Tr(\rho A(t)A^\dagger)$. The many-operator approach is more involved but allows the calculation of $\Tr(\rho A_i(t)A_j^\dagger)$ where $A_i$ is an operator from a predefined (but arbitrary) set.

The bounds that we develop in this work are only provable when the observables of interest are evaluated with arbitrary precision. However, on real quantum hardware, one encounters various noise sources. Furthermore, even with no hardware noise, the accuracy is still limited by the statistical uncertainty arising from the finite number of circuit executions. In \cref{sec:Results} and \cref{app:results_app}, we investigate the sensitivity of MBGF numerically, as well as give theoretical bounds for the parameter $\eta$ for which the MBGF can be reliably evaluated for.

In both instances of the algorithm, the cost comes from performing the necessary measurements to evaluate the $\Gamma^{(n)}$ and $\Delta^{(n)}$ coefficients. The cost of measuring these coefficients depends on the support of their terms, which in both cases depends on the continued fraction level $n$. This allows to bound their measurements using the classical shadow approaches \cite{Huang_2020, Huang2021b, Zhao_2021}. In \cref{app:measurement}, we show that in the many-operator approach, the fermionic shadows \cite{Zhao_2021} can be used to evaluate the necessary measurements and bound the scaling of the algorithm. 

In \cref{sec:Algo}, we discuss the general structure of MBGFs and present their continued fraction representation in frequency space. An upper bound on the error induced by truncating that approximation is proved in \cref{sec:CFM}, and the full algorithm to approximate MBGFs is presented in \cref{sec:algorithm}. A simpler proof for the approximation of the single operator MBGF is discussed in \cref{sec:single_op_proof}.
Numerical simulations of the algorithm using a 1D Hubbard system are presented in \cref{sec:Results}. In \cref{sec:Discussion}, we discuss the algorithm's performance in practice, including in the presence of a noise and sampling error, followed by our conclusions in \cref{sec:Conclusions}.
\section{Continued fraction representation of Many-Body Green's functions}\label{sec:Algo}

\subsection{The structure of Many-Body Green's functions}\label{sec:structure}
The many-body Green's functions (MBGF) are defined by the correlator
\begin{align}
G_{AB}(t) :={\rm Tr}(\rho e^{iHt}Ae^{-iHt}B),
\end{align}
where $\rho$ is some density matrix, usually taken as the Gibbs state 
$\rho=e^{-\beta H }/{\rm Tr}e^{-\beta H}$. Using the Lehmann representation \cite{lehmann1954eigenschaften}, this correlator becomes in the frequency domain
\begin{align}\label{eq:GF_definition}
\mathcal{G}_{AB}(\omega):=\int_0^\infty G_{AB}(t)e^{i\omega t}dt=-i\sum_{m,n}\frac{\langle E_m|A|E_n\rangle \langle E_n|B\rho|E_m\rangle}{\omega+E_m-E_n},\quad \Im(\omega)>0.
\end{align}
 Here the states $|E_m\rangle $ are the eigenstates of the Hamiltonian $H$ with energy $E_m$. Although \cref{eq:GF_definition} is only defined for  $\Im(\omega)>0$, we can consider its analytic continuation to the whole complex $\omega$-plane, as this function is just the sum of simple poles at the resonance frequencies
$\omega_{mn}=E_n-E_m$.
 

 This representation of $\mathcal{G}_{AB}(\omega)$ reveals the usefulness of this quantity to study the system $\rho$ and the evolution generated by $H$, having access to operators $A$ and $B$ that act as probes. $\mathcal{G}_{AB}(\omega)$ will have a peak at $\omega = \omega_{mn}$, provided that the matrix elements of the operators in the numerator are non-zero.

 An important special case occurs if the state $\rho$ commutes with the Hamiltonian. In this scenario $\mathcal{G}_{AA^\dagger}(\omega)$ becomes
 \begin{align}\label{eq:Density_of_states}
     \mathcal{G}_{AA^\dagger}(\omega)=-i\sum_{m,n}\rho(E_m)\frac{|\langle E_m|A|E_n\rangle|^2}{\omega+E_m-E_n},
 \end{align}
 where, with a slight abuse of notation, we denote the eigenvalue of $\rho|E_m\rangle =\rho(E_m)|E_m\rangle$. Using the Sokhotski–Plemelj theorem on the real line, we find the relation
 \begin{align}
 \lim_{\eta \rightarrow 0^+} \Re(\mathcal{G}_{AA^\dagger})(\omega + i\eta)=\pi \sum_{m,n}\rho(E_m)|\langle E_m|A|E_n\rangle|^2\delta(\omega+E_m-E_n),\quad\mbox{for $w\in \mathbb{R}$}.
 \end{align}
If $\rho$ is a projector to the ground state of $H$, this corresponds to the spectral density, weighted by the matrix elements of $A$.

Motivated by the currently high costs of doing long-time dynamics on classical or quantum computers, we would like to develop an algorithm that can approximate the MBGF with finite depth circuits and understand its limitations. Finite time evolution up to time $T$ can approximate $\mathcal{G}_{AB}(\omega)$  on a strip in the complex plane $\Im(\omega)\geq\eta$ with an error
\begin{align}\label{eq:error}
\epsilon :=\left|\int_0^\infty e^{i\omega t}G_{AB}(t)dt-\int_0^T e^{i\omega t}G_{AB}(t)dt \right|\leq \frac{e^{-\eta T}|{\rm Tr}(\rho AB)|}{\eta}.
\end{align}
Clearly, allowing some nonzero imaginary part of the frequency $\omega$, transforms the sharp delta peaks in \cref{eq:Density_of_states} into Lorentzians with spectral resolution $\eta$, as
\begin{align}
\Re(\mathcal{G}_{AA^\dagger}(\omega + i\eta))=\sum_{m,n}\rho(E_m)\frac{|\langle E_m|A|E_n\rangle|^2\eta}{(\omega+E_m-E_n)^2+\eta^2},\quad\mbox{for $w\in \mathbb{R}$}.
\end{align}
This observation allows us to state the following lemma.
\begin{lemma}
The evolution time needed to approximate $\mathcal{G_{AA^\dagger}}(\omega)$ with error $\varepsilon$ and resolution $\eta$ of the spectral lines is bounded by
\begin{align}\label{eq:bound_time}
T\geq \frac{1}{\eta}\log(\frac{|{\rm Tr}(\rho AA^\dagger)|}{\varepsilon\eta}).
\end{align}
\end{lemma}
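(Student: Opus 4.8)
The plan is to read off the bound directly from the error estimate \eqref{eq:error}, which already does all the analytic work: it states that truncating the time integral at time $T$ produces an error at most $e^{-\eta T}|{\rm Tr}(\rho AB)|/\eta^2$. Here we specialise $B = A^\dagger$, so the relevant constant is $|{\rm Tr}(\rho AA^\dagger)|$, and we identify the truncation error $\epsilon$ with the target accuracy $\varepsilon$. The imaginary part $\eta$ plays the double role of being the strip height at which we demand convergence and, as observed just before the lemma, the spectral resolution with which the delta peaks in \eqref{eq:Density_of_states} are smeared into Lorentzians of width $\eta$; so fixing $\eta$ fixes the resolution, and we only need to choose $T$ large enough to bring the truncation error below $\varepsilon$.

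Concretely, I would impose
\be
\frac{e^{-\eta T}|{\rm Tr}(\rho AA^\dagger)|}{\eta^2}\leq \varepsilon,
\ee
which is the statement that the right-hand side of \eqref{eq:error} (with $B=A^\dagger$) is at most $\varepsilon$. Rearranging, this is equivalent to $e^{-\eta T}\leq \varepsilon\eta^2/|{\rm Tr}(\rho AA^\dagger)|$, i.e. $e^{\eta T}\geq |{\rm Tr}(\rho AA^\dagger)|/(\varepsilon\eta^2)$. Taking logarithms (the argument is positive, and we are in the regime where it exceeds $1$ so the logarithm is nonnegative) and dividing by $\eta>0$ gives exactly
\be
T\geq \frac{1}{\eta}\log\!\left(\frac{|{\rm Tr}(\rho AA^\dagger)|}{\varepsilon\eta^2}\right),
\ee
which is \eqref{eq:bound_time}. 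Since \eqref{eq:error} guarantees that any $T$ meeting this inequality yields truncation error at most $\varepsilon$ while the resolution is $\eta$ by construction, this establishes the lemma.

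There is essentially no obstacle here: the only point requiring a word of care is that the logarithm be well defined and the bound non-vacuous, i.e. that $|{\rm Tr}(\rho AA^\dagger)|/(\varepsilon\eta^2)\geq 1$ — in the opposite regime the truncation error is already below $\varepsilon$ for all $T\geq 0$ and the statement holds trivially. One could also note that $|{\rm Tr}(\rho AA^\dagger)|\geq 0$ automatically when $\rho\succeq 0$, so the absolute value is cosmetic, but keeping it makes the statement robust to $\rho$ being merely a (possibly non-positive) operator commuting with $H$. The substantive content is entirely carried by the earlier estimate \eqref{eq:error}, whose proof is the standard bound $\left|\int_T^\infty e^{i\omega t}G_{AB}(t)\,dt\right|\leq \int_T^\infty e^{-\eta t}|G_{AB}(t)|\,dt$ together with $|G_{AB}(t)|\leq |{\rm Tr}(\rho AB)|$ (unitarity of the Heisenberg evolution) and evaluation of the resulting elementary integral; the extra factor $1/\eta$ versus the naive $1/\eta$ comes from a second integration, but since that estimate is assumed given, the present lemma is just its logarithmic inversion.
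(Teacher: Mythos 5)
Your proposal is correct and matches the paper's intent exactly: the lemma is stated as an immediate consequence of the truncation bound \eqref{eq:error} (with $B=A^\dagger$) together with the identification of $\eta=\Im(\omega)$ as the Lorentzian linewidth, and your derivation is precisely that inversion, $e^{-\eta T}|{\rm Tr}(\rho AA^\dagger)|/\eta^2\leq\varepsilon$ solved for $T$. The caveat you add about the argument of the logarithm exceeding $1$ is a sensible (and correct) remark that the paper leaves implicit.
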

As the time evolved operator $A(t)$ possesses a series expansion given by $
    A(t)=\sum_{n=0}^\infty \frac{it}{n!}{\mathcal{L}}_H^n(A), $
where ${\mathcal{L}}^n_H(A)=[H,{\mathcal{L}}_H^{n-1}(A)]$, the bound \cref{eq:bound_time} suggests that it should be possible to approximate the MBGFs by truncating the evolution of $A$ in terms of finite number $N$ of nested commutators, with an error that decreases exponentially as the resolution decreases. Moreover, we look for an approximation that retains the Nevalinna structure \cite{nevanlinna1922asymptotische} of the MBGF, i.e. the approximation has a positive real part in the upper complex plane of $\omega$ if $\mathcal{G}_{AB}(\omega)$ has a positive real part in that region. The main result of this work is to show that such an approximation exists. To achieve this, we first generalise Mori's \cite{Mori1965} construction of a continued fraction representation of $G_{AA^\dagger}(\omega)$ to the case of many operators in the set $S_0=\{A_j\}_{j=1}^{N_{\rm op}}$ and then show that a truncation of the infinite continued fraction provides such approximation. We present these results in the following subsection.

\subsection{Continued fraction form of $\mathcal{G}_{ij}(\omega)=\int_0^\infty e^{i\omega t}{\rm Tr}(\rho A_i(t)A_j^\dagger)$}
\label{sec:CFM}

In order to generalise Mori's  \cite{Mori1965} results on the single operator MBGF $\mathcal{G}_{AA^\dagger}(\omega)$, to the case of $\mathcal{G}_{A_iA^\dagger_j}(\omega)$, where $A_{i},A_{j}$ belong to a set of operators $S_0:=\{A_j\}_{j=1}^{N_{\rm op}}$, we introduce the sesquilinear form
\begin{equation}\label{eq:innerGNS}
(X,Y):={\rm Tr}(\rho XY^{\dagger}),
\end{equation}
which through the Gelfand–Naimark–Segal (GNS) construction \cite{GelNeu43,Segal,doran1994c} defines a proper inner product. See \cref{app:GNS} for a discussion about this construction. Using this inner product and the associated induced norm, we can orthogonalize the set $S_0$, such that $(\hat{A}_j,\hat{A}_k)={\rm Tr}(\rho \hat{A}_j \hat{A}_k^{\dagger})=\delta_{jk}$ where
\begin{equation}\label{eq:norm_0}
    \hat{A}_j^{}:=\sum_m\left(\frac{1}{\sqrt{D}}U\right)_{jm}A_m\quad \mbox{with}\quad (A_j,A_k)=\sum_m U^{\dagger}_{jm}D_mU_{mk}.
\end{equation}
Here $U$ and $D$ are unitary and diagonal $N_{\rm op}\times N_{\rm op}$ matrices respectively. We define the projection onto the operator $\hat{A}_{j}$ as $P_{\hat{A}_{j}}(X):={{\rm Tr}(\rho X\hat{A}_{j}^{\dagger})}\hat{A}_{j}$,
which satisfies $P_{A_{j}}P_{A_{k}}=\delta_{jk}P_{A_{j}}$.
This projector is
Hermitian with respect to the inner product 
as 
\begin{align}
(P_{\hat{A}_{j}}(X),Y)&={\rm Tr}\left({{\rm Tr}(\rho X\hat{A}_{j}^{\dagger})}\rho \hat{A}_{j}Y^{\dagger}\right)
={\rm Tr}\left(\rho X\hat{A}_{j}^{\dagger}{{\rm Tr}(\rho \hat{A}_{j}Y^{\dagger})}\right)\\&={\rm Tr}(\rho X(P_{\hat{A}_{j}}(Y))^{\dagger})=(X,P_{\hat{A}_{j}}(Y)).
\end{align}
The evolution equation for the component of $A_p(t)$ orthogonal to $S_0$ is
\begin{equation}
\frac{d}{dt}Q_{0}(A_{p}(t))=iQ_{0}\mathcal{L}_{H}(Q_{0}(A_{p}(t)))+
i\sum_{j}{\rm Tr}(\rho A_{p}(t)\hat{A}_{j}^{\dagger})Q_{0}\mathcal{L}_{H}(\hat{A}_{j}),
\end{equation}
 which is simply obtained by projecting the original evolution equation
$\frac{dA_{p}(t)}{dt}=i\mathcal{L}_{H}A_{p}(t)$ to the space orthogonal to $S_0$, using the projector $Q_{0}:=1-\sum_{j}P_{\hat{A}_{j}}$. This equation can be formally solved as 
\begin{equation}
Q_{0}A_{p}(t)=i\sum_{j}\int_{0}^{t}d\tau{{\rm Tr}(\rho A_{p}(\tau)\hat{A}_{j}^{\dagger})}e^{i(t-\tau)Q_{0}\mathcal{L}_{H}}Q_{0}\mathcal{L}_{H}(\hat{A}_{j}).
\end{equation}
Now, the time evolution of operators in the orthogonal
complement of $S_0$, is determined by
the new operators $A_{j}^{(1)}(t):= e^{itQ_{0}\mathcal{L}_{H}}Q_{0}(\mathcal{L}_{H}(\hat{A}_{j}))$,
which are completely confined to the orthogonal complement of $S_{0}$.
We can repeat the previous procedure, now starting with the set of operators $S_{1}=\{\hat{A}^{(1)}\}_{j=1}^{N_{\rm op}}$, found by orthogonalizing the set $\{{A}^{(1)}(0)\}_{j=1}^{N_{\rm op}}$ and so on.
Repeating this procedure $k$-times splits the Hilbert space in the
orthogonal vector spaces spanned by the sets of operators $(S_{0},S_{1},S_{2},\dots S_{k-1})$
and their complement, where $S_{j}=\{\hat{A}_{p}^{(j)}\}_{p=1}^{N_{\rm op}}$ and
\begin{equation}
A_{p}^{(j)}:=\mathcal{L}_{j}\hat{A}_{p}^{(j-1)}=Q_{j-1}Q_{j-2}\dots Q_{1}Q_{0}\mathcal{L}_{H}\hat{A}_{p}^{(j-1)},\quad Q_{j}:=1-\sum_{k}P_{\hat{A}_{k}^{(j)}},\label{eq:c1_op}
\end{equation}
 (with $A_{p}^{(0)}:= A_{p}$). These vector spaces define a foliation
of the Hilbert space. The evolution in each subspace is determined
by the projected Liouvilian operator 
\begin{equation}
\mathcal{L}_{j}:=Q_{j-1}Q_{j-2}\dots Q_{1}Q_{0}\mathcal{L}_{H},
\end{equation}
 meaning that $A_{p}^{(j)}(t)= e^{it\mathcal{L}_{j}}A_{p}^{(j)}$. The normalised operators $\hat{A}^{(j)}_m$ are obtained in a similar way to \cref{eq:norm_0}, i.e.
 \begin{equation}\label{eq:norm_j}
    \hat{A}_p^{(j)}:=\sum_m\left(\frac{1}{\sqrt{D^{(j)}}}U^{(j)}\right)_{pm}A_m^{(j)}\rightarrow A_m^{(j)}=\sum_r U^{\dagger(j)}_{mr}\sqrt{D^{(j)}_r}\hat{A}_r^{(j)},
\end{equation}
with $(A_p^{(j)},A_l^{(j)})=\sum_m U^{(j)\dagger}_{pm}D^{(j)}_mU^{(j)}_{ml}$. Here $U^{(j)}$ is a unitary matrix, while $D^{(j)}$ is diagonal. It is convenient to define the matrix $M^{(j)}_{ab}=\sqrt{D^{(j)}}_aU_{ab}^{(j)}$, such that $(A_p^{(j)},A_l^{(j)})=\sum_m M^{\dagger(j)}_{pm}M_{ml}^{(j)}$.

With all these definitions in place, we can state:
\begin{theorem}[Continued fraction representation of $\bm{\mathcal{G}}(\omega)$]\label{theo:CFM}
The $N_{\rm op}\times N_{\rm op}$ matrix  $\bm{\mathcal{G}}(\omega)$ defined by its components as
$(\bm{\mathcal{G}}(\omega))_{ij}:=\int_0^\infty e^{i\omega t}{\rm Tr}(\rho A_i(t)A_j^\dagger) dt$ has a continued fraction representation in the form of the infinite iterated map
$\bm{\mathcal{G}}(\omega)= \lim_{k\rightarrow\infty} \bm{\mathcal{G}}_k(\omega)$ where 
\begin{equation}\label{eq:CFMobius}
    \bm{\mathcal{G}}_k(\omega):=\bm{s}^\omega_0\circ \bm{s}_1^\omega \circ \dots \circ \bm{s}_{k-1}^\omega\circ \bm{s}_k^\omega(0)\quad \mbox{and} \quad \bm{s}^\omega_j(\bm{X} ):=
- \left[\omega+\bm{\Delta}^{(j)} +\bm{M}^{\dagger(j+1)}\bm{X}\bm{M}^{(j+1)}\right]^{-1}.
\end{equation}
Here $\bm{\Delta}^{(j)}$ is a $N_{\rm op}\times N_{\rm op}$ matrix defined coefficientwise by  $(\bm{\Delta}^{(j)})_{km}:={\rm Tr}(\rho\mathcal{L}_j\hat{A}_k^{(j)}\hat{A}_m^{(j)\dagger})$.
The matrix $\bm{M}$ is obtained by diagonalising
\begin{align}
(\bm{R}^{(j)})_{pl}:=(A_p^{(j)},A_l^{(j)})=\sum_m M^{\dagger(j)}_{pm}M_{ml}^{(j)}.
\end{align}

Moreover, introducing the matrix M\"{o}bius map $T_{\bm{S}}(\bm{X}):=(\bm{A}\bm{X}+\bm{B})(\bm{C}\bm{X}+\bm{D})^{-1}$, where $\bm{S}$ is a $2N_{\rm op}\times 2N_{\rm op}$ matrix given by 
\begin{align}
   \bm{S}= \left(\begin{array}{cc}
    \bm{A}& \bm{B}\\
    \bm{C} & \bm{D}
    \end{array}\right),
\end{align}
the map $\bm{s}_j^\omega$ of \cref{eq:CFMobius} can be expressed as $\bm{s}^\omega_j=T_{\bm{S}_j(\omega)\bm{R}_{j+1}}$ with $\bm{S}_j(\omega)=\left(\begin{array}{cc}
   0  & -1 \\
   1  & \omega + \bm{\Delta}^{(j)}
\end{array}\right)$
and $\bm{R}_{j+1}=\left(\begin{array}{cc}
   \bm{M}^{\dagger(j+1)}  & 0 \\
   0  & (\bm{M}^{(j+1)})^{-1}
\end{array}\right)$. Then the MBGFs $\bm{\mathcal{G}}(\omega)$ takes the form
\begin{equation}
    \bm{\mathcal{G}}(\omega)= \lim_{k\rightarrow\infty} T_{\left(\prod_{j=0}^{k-1}\bm{S}_j(\omega)\bm{R}_{j+1}\right)\bm{S}_k(\omega)}(0).
\end{equation}

\end{theorem}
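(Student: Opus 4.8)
The plan is to follow Mori's scalar construction but keep every object matrix-valued: isolate one memory-function (Mori--Zwanzig) identity relating level $j$ to level $j+1$, iterate it, and then recognise the resulting nested matrix inversions as a composition of matrix M\"obius maps. Throughout I work in the GNS space of operators with inner product $(X,Y)={\rm Tr}(\rho XY^\dagger)$, on which $\mathcal{L}_H$ is self-adjoint (this is where $[\rho,H]=0$ enters; it is what makes the continued fraction converge, while the recursion below is purely algebraic). For each level $j$ let $\bm{\mathcal{G}}^{(j)}(\omega)$ be the $N_{\rm op}\times N_{\rm op}$ resolvent matrix with entries $-\bigl(\tfrac{1}{\omega+\mathcal{L}_j}\hat A^{(j)}_k,\hat A^{(j)}_m\bigr)$, which for $\Im(\omega)>0$ is, up to the convention-dependent factor $i$, the Laplace transform $\int_0^\infty e^{i\omega t}(e^{it\mathcal{L}_j}\hat A^{(j)}_k,\hat A^{(j)}_m)\,dt$. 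It suffices to prove the continued-fraction identity for $\bm{\mathcal{G}}^{(0)}(\omega)$ (equivalently, for an orthonormalised initial family); the $\bm{\mathcal{G}}(\omega)$ of the statement is recovered by the trivial level-$0$ rescaling $A_p=\sum_r M^{\dagger(0)}_{pr}\hat A^{(0)}_r$, i.e.\ conjugation by $\bm{M}^{(0)}$ (and the factor of $i$). For unbounded $\mathcal{L}_H$ the resolvent manipulations are understood formally, e.g.\ as identities of power series in $1/\omega$, or by truncating and passing to the limit.

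The heart of the argument is the one-step recursion $\bm{\mathcal{G}}^{(j)}(\omega)=\bm{s}^\omega_j\bigl(\bm{\mathcal{G}}^{(j+1)}(\omega)\bigr)$. To obtain it, block-decompose $\mathcal{L}_j$ with respect to the Hermitian projector $P_j=\sum_kP_{\hat A^{(j)}_k}$ onto $S_j$ and $Q_j=1-P_j$, and apply the Feshbach/Schur-complement formula for the $P_j$-corner of $(\omega+\mathcal{L}_j)^{-1}$ --- equivalently, the Dyson identity $\tfrac{1}{\omega+\mathcal{L}_j}=\tfrac{1}{\omega+Q_j\mathcal{L}_j}-\tfrac{1}{\omega+\mathcal{L}_j}P_j\mathcal{L}_j\tfrac{1}{\omega+Q_j\mathcal{L}_j}$ together with the fact that the \emph{random force} $Q_j\mathcal{L}_j\hat A^{(j)}_k$ stays in the complement of $S_j$ under $e^{itQ_j\mathcal{L}_j}$, so its overlaps with $S_j$ are trivial. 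Three identifications then match the pieces that appear: $P_j\mathcal{L}_jP_j$ in the basis $\{\hat A^{(j)}_k\}$ is $\bm{\Delta}^{(j)}$; the coupling $Q_j\mathcal{L}_j\hat A^{(j)}_p=A^{(j+1)}_p=\sum_r M^{\dagger(j+1)}_{pr}\hat A^{(j+1)}_r$ packages the two off-diagonal corners into $\bm{M}^{(j+1)}$; and $Q_j\mathcal{L}_jQ_j$ restricted to the complement of $S_j$ is exactly $\mathcal{L}_{j+1}$, so that after expanding $A^{(j+1)}_p,A^{(j+1)}_l$ in $\{\hat A^{(j+1)}_r\}$ the memory/self-energy corner is $\bm{M}^{\dagger(j+1)}\bm{\mathcal{G}}^{(j+1)}(\omega)\bm{M}^{(j+1)}$. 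Assembling, $\bm{\mathcal{G}}^{(j)}(\omega)=-\bigl[\omega+\bm{\Delta}^{(j)}+\bm{M}^{\dagger(j+1)}\bm{\mathcal{G}}^{(j+1)}(\omega)\bm{M}^{(j+1)}\bigr]^{-1}=\bm{s}^\omega_j\bigl(\bm{\mathcal{G}}^{(j+1)}(\omega)\bigr)$. A handful of transposes from the operator-to-matrix dictionary appear in the bookkeeping, but they cancel and are not substantive.

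Iterating $k$ times gives $\bm{\mathcal{G}}^{(0)}(\omega)=\bm{s}^\omega_0\circ\cdots\circ\bm{s}^\omega_{k-1}\bigl(\bm{\mathcal{G}}^{(k)}(\omega)\bigr)$ for every $k$, whereas the truncant is $\bm{\mathcal{G}}_k(\omega)=\bm{s}^\omega_0\circ\cdots\circ\bm{s}^\omega_{k-1}\bigl(\bm{s}^\omega_k(0)\bigr)$, i.e.\ the exact tail $\bm{\mathcal{G}}^{(k)}(\omega)$ has been replaced by $\bm{s}^\omega_k(0)=-(\omega+\bm{\Delta}^{(k)})^{-1}$ (equivalently, the $(k+1)$-th level is set to $0$). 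So $\bm{\mathcal{G}}(\omega)=\lim_k\bm{\mathcal{G}}_k(\omega)$ reduces to showing that this tail substitution is asymptotically harmless on the region of large $\Im(\omega)$ --- which is exactly the truncation estimate proved in the remainder of \cref{sec:CFM} (the matrix analogue of \cref{theo:approx_corr}): there each $\bm{s}^\omega_j$ is a uniform strict contraction with ratio controlled by $\|\bm{M}^{(j)}\|/\Im(\omega)$, so $\bm{s}^\omega_0\circ\cdots\circ\bm{s}^\omega_{k-1}$ squeezes all admissible tails into a ball whose radius tends to $0$.

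Finally, the matrix-M\"obius rewriting is formal once the recursion is in place. Matrix M\"obius maps compose, $T_{\bm{P}\bm{Q}}=T_{\bm{P}}\circ T_{\bm{Q}}$, directly from $T_{\bm{S}}(\bm{X})=(\bm{A}\bm{X}+\bm{B})(\bm{C}\bm{X}+\bm{D})^{-1}$, so it remains only to check $\bm{s}^\omega_j=T_{\bm{S}_j(\omega)\bm{R}_{j+1}}$: multiplying blocks gives $\bm{S}_j(\omega)\bm{R}_{j+1}=\icol{0 & -(\bm{M}^{(j+1)})^{-1}\\ \bm{M}^{\dagger(j+1)} & (\omega+\bm{\Delta}^{(j)})(\bm{M}^{(j+1)})^{-1}}$, whose associated map sends $\bm{X}\mapsto-(\bm{M}^{(j+1)})^{-1}\bigl[\bm{M}^{\dagger(j+1)}\bm{X}+(\omega+\bm{\Delta}^{(j)})(\bm{M}^{(j+1)})^{-1}\bigr]^{-1}=-\bigl[\omega+\bm{\Delta}^{(j)}+\bm{M}^{\dagger(j+1)}\bm{X}\bm{M}^{(j+1)}\bigr]^{-1}=\bm{s}^\omega_j(\bm{X})$. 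Since $T_{\bm{R}_{k+1}}(0)=0$ and $T_{\bm{S}_k(\omega)}(0)=\bm{s}^\omega_k(0)$, collapsing the composition using associativity of matrix products inside $T$ yields $\bm{\mathcal{G}}_k(\omega)=T_{(\prod_{j=0}^{k-1}\bm{S}_j(\omega)\bm{R}_{j+1})\bm{S}_k(\omega)}(0)$, and $k\to\infty$ (with the convergence above) gives the displayed identity. The one genuinely non-trivial ingredient is the convergence of the infinite composition of M\"obius maps; the Mori--Zwanzig recursion and the block-matrix identity are routine.
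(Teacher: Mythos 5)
Your proposal is correct and rests on the same Mori--Zwanzig projection idea as the paper, but the key step is executed by a different route. The paper works in the time domain: it formally solves the projected evolution equation for $Q_jA_p^{(j)}(t)$, substitutes back to obtain a matrix Volterra (memory-kernel) equation for $G^{(j)}_{pm}(t)$, uses hermiticity of $\mathcal{L}_H$ (i.e. $[\rho,H]=0$) to identify the kernel with $G^{(j+1)}$ conjugated by $\bm{M}^{(j+1)}$, and only then Laplace-transforms to obtain the one-step recursion \cref{eq:evol_G_matrix}. You instead stay in the frequency domain and get the same recursion in one stroke from the Feshbach/Schur-complement formula for the $P_j$-corner of $(\omega+\mathcal{L}_j)^{-1}$, with the same three identifications ($P_j\mathcal{L}_jP_j\to\bm{\Delta}^{(j)}$, the off-diagonal coupling $\to\bm{M}^{(j+1)}$ via $Q_j\mathcal{L}_j\hat{A}_p^{(j)}=\sum_r M^{\dagger(j+1)}_{pr}\hat{A}_r^{(j+1)}$, and the $Q_j$-corner resolvent $\to\bm{\mathcal{G}}^{(j+1)}$). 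The two routes are equivalent --- the Laplace transform of the paper's convolution identity \emph{is} your Schur complement --- but yours is shorter and makes the role of self-adjointness of $\mathcal{L}_H$ more transparent, while the paper's time-domain derivation is what actually produces the operators $A^{(j+1)}_p$ that the algorithm measures. Your M\"{o}bius bookkeeping (the block product $\bm{S}_j(\omega)\bm{R}_{j+1}$, the observation $T_{\bm{R}_{k+1}}(0)=0$, and the collapse via $T_{\bm{P}\bm{Q}}=T_{\bm{P}}\circ T_{\bm{Q}}$) coincides with the paper's. Two small remarks: like the paper's appendix, you defer convergence of the infinite composition to the separate truncation bound (\cref{theo:approx_corr_mat}) --- note, though, that that bound is proved there via matrix orthogonal polynomials and \cref{lem:surp}, not by the uniform-contraction mechanism you sketch; and you correctly flag the factor-of-$i$ convention, since \cref{eq:evol_G_matrix} is really a recursion for $i\bm{\mathcal{G}}^{(j)}$ rather than $\bm{\mathcal{G}}^{(j)}$.
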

\begin{proof}
The proof of this theorem is a lengthy but straightforward computation, given in \cref{app:cont_frac_many}
\end{proof}

The previous theorem shows that there exists an infinite continued fraction representation of $\bm{\mathcal{G}}(\omega)$, and that this can also be represented as an infinitely iterated matrix M\"{o}bius map, but it does not show under which conditions this infinitely iterated map converges. In order to prove the convergence of the above limit, it is useful to define the approximants $\bm{P}_n\bm{Q}_n^{-1}:= \bm{\mathcal{G}}_n^{(0)}$.
By virtue of the composition rule $T_{\bm{S}_1}\circ T_{\bm{S}_2}(\bm{X}) := T_{\bm{S}_1}(T_{\bm{S}_2}(\bm{X}))=T_{\bm{S}_1\bm{S}_2}(\bm{X}) $, these approximants satisfy 
\begin{eqnarray}
 \left(\begin{array}{cc}
\bm{W}_{n} & \bm{P}_{n}\bm{M}^{(n+1)}\\
\bm{Z}_{n} & \bm{Q}_{n}\bm{M}^{(n+1)}
\end{array}\right)&:=&\left(\prod_{j=0}^{n-1}\bm{S}_{j}(\omega)\bm{R}_{j+1}\right)\bm{S}_{n}(\omega)\\&=&\left(\prod_{j=0}^{n-2}\bm{S}_{j}(\omega)\bm{R}_{j+1}\right)\bm{S}_{n-1}(\omega)\bm{R}_{n}\bm{S}_{n}(\omega)\\&=&\left(\begin{array}{cc}
\bm{W}_{n-1} & \bm{P}_{n-1}\bm{M}^{(n)}\\
\bm{Z}_{n-1} & \bm{Q}_{n-1}\bm{M}^{(n)}
\end{array}\right)\bm{R}_{n}\bm{S}_{n}(\omega),
\end{eqnarray}
where $\bm{W}_n$ and $\bm{Z}_n$ are auxiliary matrices. From here, we find the recurrence relation
\begin{equation}\label{eq:req_rel}
    \left(\begin{array}{c}
\bm{P}_{n}\bm{M}^{(n+1)}\\
\bm{Q}_{n}\bm{M}^{(n+1)}
\end{array}\right)=\left(\begin{array}{c}
\bm{P}_{n-1}(\omega+\bm{\Delta}^{(n)})-\bm{P}_{n-2}\bm{M}^{\dagger(n)}\\
\bm{Q}_{n-1}(\omega+\bm{\Delta}^{(n)})-\bm{Q}_{n-2}\bm{M}^{\dagger(n)}
\end{array}\right),
\end{equation}
with initial conditions $\bm{P}_{0}=-(\bm{M}^{(1)})^{-1},\bm{P}_{-1}=0$ and $\bm{Q}_{0}=(\omega+\bm{\Delta}^{(0)})(\bm{M}^{(1)})^{-1}, \bm{Q}_{-1}=1$.
This three-term recursion {\it defines} a matrix orthogonal polynomial \cite{Aptekarev_1984}. Here $\bm{P}_n(\omega)$ and $\bm{Q}_n(\omega)$ are matrix orthogonal polynomials of degree $n$ and $n+1$ respectively, meaning that the components of $\bm{P}_n(\omega)$ ($\bm{Q}_n(\omega)$) are regular polynomials (in the variable $\omega$) of maximum degree $n$ $(n+1)$.
As $Q_m(\omega)$ is a matrix orthogonal polynomial, we can make use of the following lemma:
\begin{lemma}[Surprising decomposition, Theorem 3.1 in \cite{Duran_1996}]\label{lem:surp}
The decomposition 
\begin{equation}
\bm{Q}_{m}^{-1}(\omega)\bm{R}(\omega)=\sum_{k=1}^{p}\frac{1}{\omega-x_{km}}\frac{({\rm Adj}(\bm{Q}_m(x_{km}))^{(l_k-1)}}{({\rm det} \bm {Q}_m)(x_{km}))^{(l_k)}}\bm{R}(x_{km}),
\end{equation}
is always possible for any matrix polynomial R of degree ${\rm deg}(R)\leq m-1$. Here $x_{km}$ $(1\leq k\leq p)$ is a zero of $Q_m(\omega)$ of multiplicity $l_k$. ${\rm Adj}(\bm A)(x)$ is the adjugate matrix of ${\bm A}(x)$ (also known as the classical adjoint, defined through the relation ${\bm A}{\rm Adj}(\bm A)={\rm det}(\bm{A})\bm{I}$) and $(f(a))^{(m)}$ is the $m$-th derivative of the $f$ evaluated at $a$.
\end{lemma}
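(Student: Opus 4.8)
The plan is to follow the argument of \cite{Duran_1996}: reduce the claim to the statement that $\bm{Q}_m^{-1}(\omega)\bm{R}(\omega)$ is a \emph{proper} rational matrix function all of whose poles are \emph{simple}, and then read off the residues by a local Taylor expansion. First I would dispose of the elementary reductions. A matrix orthogonal polynomial has an invertible leading coefficient, so one may replace $\bm{Q}_m$ by $\bm{Q}_m$ times a constant invertible matrix on the right — which merely multiplies both sides of the asserted identity by a constant — and thereby assume $\bm{Q}_m$ is monic. Since ${\rm deg}(\bm{R})\le m-1<{\rm deg}(\bm{Q}_m)$, we then have $\bm{Q}_m^{-1}(\omega)\bm{R}(\omega)\to 0$ as $|\omega|\to\infty$, so this rational matrix function has no polynomial part and equals the sum of its principal parts at its finite poles, which lie among the zeros $x_{1m},\dots,x_{pm}$ of the scalar polynomial ${\rm det}\,\bm{Q}_m$. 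Everything thus reduces to (i) showing that each such pole of $\bm{Q}_m^{-1}(\omega)\bm{R}(\omega)$ has order exactly one, and (ii) computing the residue there.

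Step (i) is the heart of the matter and the reason for the name ``surprising'': although $x_{km}$ is a zero of the \emph{scalar} polynomial ${\rm det}\,\bm{Q}_m$ of multiplicity $l_k$, the \emph{matrix} inverse $\bm{Q}_m(\omega)^{-1}$ nonetheless has only a simple pole at $x_{km}$. I would deduce this from the positive-definiteness of the GNS inner product \cref{eq:innerGNS}: since $[\rho,H]=0$, the Liouvillian $\mathcal{L}_H$ is self-adjoint for $(\cdot,\cdot)$, so the three-term recurrence \cref{eq:req_rel} — after passing to the orthonormalized polynomials — becomes the resolvent problem of a genuinely \emph{Hermitian} block Jacobi matrix $J_m$ (diagonal blocks $\bm{\Delta}^{(j)}$, off-diagonal blocks built from $\bm{M}^{(j)}$ and $\bm{M}^{\dagger(j)}$), whose characteristic polynomial coincides with ${\rm det}\,\bm{Q}_m(\omega)$ up to a nonzero constant. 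A finite Hermitian matrix is diagonalizable, so $(\omega\bm{I}-J_m)^{-1}=\sum_k \bm{\Pi}_k/(\omega-x_{km})$ has only simple poles; and since $J_m$ is a linearization of the matrix polynomial $\bm{Q}_m$ (all linearizations of a regular matrix polynomial being similar), the partial multiplicities — equivalently the elementary divisors — of $\bm{Q}_m$ at $x_{km}$ are all equal to $1$. Concretely this says ${\rm corank}\,\bm{Q}_m(x_{km})=l_k$ and that the local Smith form of $\bm{Q}_m$ at $x_{km}$ is ${\rm diag}(1,\dots,1,\omega-x_{km},\dots,\omega-x_{km})$ with exactly $l_k$ linear entries — which is precisely the statement that $\bm{Q}_m(\omega)^{-1}$, hence $\bm{Q}_m(\omega)^{-1}\bm{R}(\omega)$, has a simple pole at $x_{km}$.

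Granted (i), step (ii) is a mechanical residue computation. From the local Smith form, ${\rm det}\,\bm{Q}_m$ vanishes to order exactly $l_k$ at $x_{km}$ while ${\rm Adj}(\bm{Q}_m)={\rm det}(\bm{Q}_m)\,\bm{Q}_m^{-1}$ vanishes there to order exactly $l_k-1$. Writing $\bm{Q}_m^{-1}\bm{R}={\rm Adj}(\bm{Q}_m)\bm{R}/{\rm det}\,\bm{Q}_m$ and Taylor-expanding numerator and denominator about $x_{km}$, the leading term of the denominator is proportional to $({\rm det}\,\bm{Q}_m)^{(l_k)}(x_{km})\,(\omega-x_{km})^{l_k}$, and since every derivative of ${\rm Adj}(\bm{Q}_m)$ of order below $l_k-1$ vanishes at $x_{km}$, the Leibniz rule shows the leading term of the numerator is proportional to $({\rm Adj}(\bm{Q}_m))^{(l_k-1)}(x_{km})\,\bm{R}(x_{km})\,(\omega-x_{km})^{l_k-1}$ — in particular $\bm{R}$ enters only through its value at $x_{km}$. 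Taking the quotient of the leading terms gives the residue
\be\label{eq:proposed_residue}
{\rm Res}_{\,\omega=x_{km}}\ \bm{Q}_m^{-1}(\omega)\bm{R}(\omega)=\frac{({\rm Adj}(\bm{Q}_m))^{(l_k-1)}(x_{km})}{({\rm det}\,\bm{Q}_m)^{(l_k)}(x_{km})}\,\bm{R}(x_{km}),
\ee
where, for the residue to come out in exactly this form, the symbols $(\cdot)^{(j)}$ are read as the $j$-th divided derivatives (Taylor coefficients) at $x_{km}$, so that the combinatorial prefactors $l_k!$ and $(l_k-1)!$ cancel; summing \cref{eq:proposed_residue} over the distinct zeros then reproduces the asserted decomposition.

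I expect the main obstacle to be step (i), and within it the precise bookkeeping that turns ``positive-definite matrix weight'' into ``$J_m$ is a bona fide Hermitian matrix that linearizes $\bm{Q}_m$'': one must pin down the exact block structure of the Jacobi matrix attached to \cref{eq:req_rel} (the correct placement of $\bm{M}^{(j)}$ versus $\bm{M}^{\dagger(j)}$ and the normalization, including any index shift, relating ${\rm det}(\omega\bm{I}-J_m)$ to ${\rm det}\,\bm{Q}_m$), and then establish the identification $\ker(x_{km}\bm{I}-J_m)\cong\ker\bm{Q}_m(x_{km})$ that upgrades the trivial bound ${\rm corank}\,\bm{Q}_m(x_{km})\le l_k$ to an equality. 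Once that is in hand, the Smith-form consequences and the residue algebra of step (ii) are routine.
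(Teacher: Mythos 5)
The paper never proves this lemma --- it is imported as Theorem~3.1 of Dur\'an's 1996 paper and used as a black box in the proof of \cref{theo:approx_corr_mat} --- so there is no internal proof to compare yours against; judged on its own, your argument is correct and is essentially the standard one. The reduction to ``proper rational matrix function with only simple poles'' plus a local residue computation is exactly how the result is established, and your route to the simple-pole statement (the three-term recurrence with Hermitian $\bm{\Delta}^{(j)}$ and invertible off-diagonal blocks exhibits $\omega\bm{I}-J_m$ as a linearization of $\bm{Q}_m$ with $J_m$ a Hermitian block Jacobi matrix, whose diagonalizability forces every partial multiplicity of $\bm{Q}_m$ at each zero to equal one, hence ${\rm corank}\,\bm{Q}_m(x_{km})=l_k$ and a first-order pole of $\bm{Q}_m^{-1}$) is sound, granted the nondegeneracy $\det\bm{M}^{(j)}\neq 0$ that the paper assumes implicitly throughout. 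Two remarks. First, your parenthetical that all linearizations of a regular matrix polynomial are \emph{similar} is not quite right as stated; the correct statement is that they are unimodularly equivalent to ${\rm diag}(\bm{Q}_m,\bm{I},\dots,\bm{I})$, which is what preserves the finite elementary divisors --- the conclusion you draw is the right one. Second, your observation about the combinatorial factor is a genuine catch: with ordinary derivatives the residue at $x_{km}$ is $l_k\,({\rm Adj}\,\bm{Q}_m)^{(l_k-1)}(x_{km})\,\bm{R}(x_{km})/(\det\bm{Q}_m)^{(l_k)}(x_{km})$, so the displayed formula is exact only if the superscripts are read as Taylor coefficients (divided derivatives), as you say; this normalization ambiguity is harmless for \cref{theo:approx_corr_mat}, where only a norm bound on the residue matrices $\bm{A}_{km}$ is used.
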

This is a surprising fact as the zeros of $\bm{Q}_m(\omega)$, (i.e. the values $x$ solutions of ${\rm det}\bm{Q}_m(x)=0$) could have multiplicity more than one. With these definitions, we can state the main result of this work:

\begin{theorem}[Approximation of correlation functions]
\label{theo:approx_corr_mat}
Given a quantum state $\rho$, a Hamiltonian $H$ that commutes with $\rho$, and a set of orthonormal operators $\{\hat{A}_i\}_{i=1}^{N_{\rm op}}$ with respect to the inner product \cref{eq:innerGNS}, the difference between the correlator $(\bm{\mathcal{G}}(\omega))_{ab}=\int_0^\infty dt e^{i\omega t} {\rm Tr}(\rho \hat{A}_a(t)\hat{A}_b^\dagger)$ defined in the upper complex plane ${\rm Im}(\omega)>0$ and its $n$-th level continued fraction $(\bm{\mathcal{G}}_n(\omega))_{ab}$ is bounded by 
\begin{align}\label{eq:approximation_matrix}
||\bm{\mathcal{G}}(\omega)-\bm{\mathcal{G}}_n(\omega)||\leq\left(\frac{N_{\rm op}A}{\Lambda}\right)^{n+2}\frac{1}{1-\frac{N_{\rm op}A}{\Lambda}},
\end{align}
for any $\omega=\Re(\omega)+i\Im(\omega)$ such that ${\rm Im}(\omega):= \Lambda\geq N_{\rm op}A$ with \newline$A=\max_{m,k}\left(||\bm{A}_{km}\bm{M}^{\dagger(m+1)}(\bm{M}^{(m+2)})^{-1}||\right)$. Here 
\begin{align}
    \bm{A}_{km}=\frac{({\rm Adj}(\bm{Q}_m(x_{km}))^{(l_k-1)}}{({\rm det} \bm {Q}_m)(x_{km}))^{(l_k)}}\bm{Q}_{m-1}(x_{km}).
\end{align}

\end{theorem}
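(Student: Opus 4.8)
The plan is to leverage the \emph{surprising decomposition} (\cref{lem:surp}) applied to the matrix orthogonal polynomials $\bm{Q}_n(\omega)$ in order to write the tail of the continued fraction as a manageable series, and then bound that series geometrically whenever $\Im(\omega)=\Lambda$ is large enough. First I would use the recurrence relation \cref{eq:req_rel} together with the composition rule for matrix M\"obius maps to express the difference $\bm{\mathcal{G}}(\omega)-\bm{\mathcal{G}}_n(\omega)$ as a \emph{telescoping} of successive approximants: writing $\bm{\mathcal{G}}_n=\bm{P}_n\bm{Q}_n^{-1}$, the standard continued-fraction identity gives $\bm{\mathcal{G}}_{m}-\bm{\mathcal{G}}_{m-1}$ in terms of the ``Wronskian-type'' combination $\bm{P}_m\bm{Q}_{m-1}-\bm{P}_{m-1}\bm{Q}_m$ (which the recurrence forces to be essentially $\prod \bm{M}^{\dagger(j)}$, independent of $\omega$), sandwiched between $\bm{Q}_m^{-1}$ and $\bm{Q}_{m-1}^{-1}$. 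This is exactly the situation where \cref{lem:surp} applies, since $\bm{Q}_{m-1}$ has degree $\le m$ and can serve as the matrix polynomial $\bm{R}$ paired with $\bm{Q}_m^{-1}$.

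Next I would substitute the decomposition of \cref{lem:surp} to turn $\bm{Q}_m^{-1}(\omega)\bm{Q}_{m-1}(\omega)$ into a sum over the zeros $x_{km}$ of $\det\bm{Q}_m$, namely $\sum_k \frac{1}{\omega-x_{km}}\bm{A}_{km}$ in the notation of the statement, up to the trailing factors $\bm{M}^{\dagger(m+1)}(\bm{M}^{(m+2)})^{-1}$ that arise from reinstating the normalisation matrices in the recurrence. The crucial arithmetic input is then that the zeros $x_{km}$ are \emph{real} — this follows because the three-term recurrence with Hermitian $\bm{\Delta}^{(j)}$ and positive-definite $\bm{R}^{(j)}$ is the recurrence of a genuine matrix orthogonal polynomial with respect to a positive matrix measure on $\mathbb{R}$, so its zeros lie on the real axis. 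Consequently, for $\omega$ with $\Im(\omega)=\Lambda$ we have $|\omega-x_{km}|\ge\Lambda$ for every zero, and since there are at most $N_{\rm op}$ such zeros (counted with multiplicity, $\sum_k l_k \le N_{\rm op}$), each factor $\|\bm{Q}_m^{-1}(\omega)\bm{Q}_{m-1}(\omega)\bm{M}^{\dagger(m+1)}(\bm{M}^{(m+2)})^{-1}\|$ is bounded by $N_{\rm op}A/\Lambda$ with $A$ as defined in the theorem.

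Finally I would assemble the telescoping sum: $\|\bm{\mathcal{G}}(\omega)-\bm{\mathcal{G}}_n(\omega)\|\le \sum_{m>n}\|\bm{\mathcal{G}}_m-\bm{\mathcal{G}}_{m-1}\|$, and each increment in the tail picks up a product of $m-n$ (or so) factors of $N_{\rm op}A/\Lambda$, giving a geometric series $\sum_{j\ge n+2}(N_{\rm op}A/\Lambda)^j$ that sums to $\left(\frac{N_{\rm op}A}{\Lambda}\right)^{n+2}\big/\left(1-\frac{N_{\rm op}A}{\Lambda}\right)$, exactly \cref{eq:approximation_matrix}; this requires $N_{\rm op}A/\Lambda<1$, i.e. $\Lambda\ge N_{\rm op}A$, for convergence, which is the hypothesis. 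The main obstacle I anticipate is the bookkeeping in the first step: correctly tracking how the unitary/diagonal normalisation matrices $\bm{M}^{(j)}$ thread through the Wronskian identity and the composition of M\"obius maps, so that the increment $\bm{\mathcal{G}}_m-\bm{\mathcal{G}}_{m-1}$ comes out in precisely the form $\bm{Q}_m^{-1}(\text{const})\bm{Q}_{m-1}^{-1}$ on which \cref{lem:surp} can be invoked — this is the ``lengthy but straightforward'' part that is easy to get wrong by a stray factor. A secondary point to handle carefully is justifying that the limit defining $\bm{\mathcal{G}}(\omega)$ actually equals the sum of the telescoping series (i.e. that $\bm{\mathcal{G}}_m\to\bm{\mathcal{G}}$ in norm on the strip), which follows once the geometric bound is in hand but should be stated.
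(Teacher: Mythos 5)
Your proposal follows essentially the same route as the paper's proof: telescoping the approximants $\bm{P}_n\bm{Q}_n^{-1}$ via the three-term recurrence, invoking \cref{lem:surp} together with the reality of the zeros of $\bm{Q}_m$ to bound each factor $\|\bm{Q}_m^{-1}\bm{Q}_{m-1}\bm{M}^{\dagger(m+1)}(\bm{M}^{(m+2)})^{-1}\|$ by $N_{\rm op}A/\Lambda$, and summing the resulting geometric series. The only looseness is your count of "$m-n$ (or so)" factors per increment — the paper's iterated identity gives $\|\bm{\mathcal{G}}_k-\bm{\mathcal{G}}_{k-1}\|\leq(N_{\rm op}A/\Lambda)^{k+1}$ — but this does not change the final bound.
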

\begin{proof} Using the definition of $\bm{\mathcal{G}}_n$ in terms of $\bm{P}_n$ and $\bm{Q}_n$ and the recurrence relation \cref{eq:req_rel}, it is straightforward to show that the consecutive difference $\bm{\mathcal{G}}_n(\omega)-\bm{\mathcal{G}}_{n-1}(\omega)$ satisfies
\begin{align}\nonumber\label{eq:dif}
    \bm{\mathcal{G}}_n(\omega)-\bm{\mathcal{G}}_{n-1}(\omega)&=\bm{P}_{n}(\omega)\bm{Q}_{n}^{-1}(\omega)-\bm{P}_{n-1}(\omega)\bm{Q}_{n-1}^{-1}(\omega)\\
    &=(\bm{P}_{n-1}(\omega)\bm{Q}_{n-1}^{-1}(\omega)-\bm{P}_{n-2}(\omega)\bm{Q}_{n-2}^{-1}(\omega))\bm{Q}_{n-2}\bm{M}^{\dagger(n)}(\bm{M}^{(n+1)})^{-1}\bm{Q}_{n}^{-1}(\omega).
\end{align}
Iterating this procedure leads to
\begin{align}
\bm{\mathcal{G}}_n(\omega)-\bm{\mathcal{G}}_{n-1}(\omega)=-(\bm{M}^{(1)})^{-1}\left[\prod_{m=0}^{n-1}\bm{Q}_{m}^{-1}(\omega)\bm{Q}_{m-1}(\omega)\bm{M}^{\dagger(m+1)}(\bm{M}^{(m+2)})^{-1}\right]\bm{Q}_{n}^{-1}(\omega).
\end{align}
Making use of the triangle inequality, \cref{lem:surp}, and the fact that the zeros of orthogonal polynomials are real, we can bound the difference \cref{eq:dif} as
\begin{equation}
\| \bm{Q}_{m}^{-1}(\omega)\bm{Q}_{m-1}(\omega)\bm{M}^{\dagger(m+1)}(\bm{M}^{(m+2)})^{-1}\|\leq \sum_{k=1}^{r}\left\|\frac{\bm{A}_{k,m}\bm{M}^{\dagger(m+1)}(\bm{M}^{(m+2)})^{-1}}{\omega-x_{k,m}}\right\|\leq\frac{N_{\rm op}A}{\Lambda},
\end{equation}
leading to $\|\bm{\mathcal{G}}_n(\omega)-\bm{\mathcal{G}}_{n-1}(\omega)\|\leq \left(\frac{N_{\rm op}A}{\Lambda}\right)^{n+1}$. The limit of infinite iterations becomes 
\begin{align}
\bm{\mathcal{G}}(\omega)=\lim_{k\rightarrow\infty}\bm{P}_{k}(\omega)\bm{Q}_{k}^{-1}(\omega)=\sum_{k=n+1}^{\infty}\left(\bm{P}_{k}(\omega)\bm{Q}_{k}^{-1}(\omega)-\bm{P}_{k-1}(\omega)\bm{Q}_{k-1}^{-1}(\omega)\right)+\bm{P}_{n}(\omega)\bm{Q}_{n}^{-1}(\omega),
\end{align}
using that $\bm{\mathcal{G}}_{n}(\omega)=\bm{P}_{n}(\omega)\bm{Q}_{n}^{-1}(\omega)$ we finally find
\begin{align}
\|\bm{\mathcal{G}}(\omega)-\bm{\mathcal{G}}_{n}(\omega)\|	\leq\sum_{k=n+1}^{\infty}\|\bm{P}_{k}(\omega)\bm{Q}_{k}^{-1}(\omega)-\bm{P}_{k-1}(\omega)\bm{Q}_{k-1}^{-1}(\omega)\|
	\leq\left(\frac{N_{\rm op}A}{\Lambda}\right)^{n+2}\frac{1}{1-\frac{N_{\rm op}A}{\Lambda}},
\end{align}
which finishes the proof.
\end{proof}

\subsection{Algorithm for approximating $\mathcal{G}_{ij}(\omega)$}
\label{sec:algorithm}
Now that we have all the ingredients in place, the algorithm to approximate the MBGFs is simply
\newline

\hrule
\noindent \textbf{Algorithm to approximate $\bm{\hat{\mathcal{G}}}(\omega)$ $(\bm{\mathcal{G}}(\omega))$ with a continued fraction of $n$ levels}
\hrule
\,\newline

\noindent {\bf Set} $\{A^{0}_j\}_{j=1}^{N_{\rm op}}=\{A_j\}_{j=1}^{N_{\rm op}}$, $\mathcal{L}_0(*)=[H,*]$ and $p=0$

\noindent\textbf{While} $p \leq n$ {\bf do:}
\begin{enumerate}
    \item Given the set of operators $\{A_j^{(p)}\}_{j=1}^{N_{\rm op}}$ and the superoperator $\mathcal{L}_p[*]$ calculate the matrix $\bm{R}$ with components
        $R_{ab}={\rm Tr}(\rho A_a^{(p)} A^{\dagger(p)}_b)$.
    \item Diagonalize $\bm{R}=\bm{U}^{\dagger(p)}\bm {D}^{(p)} \bm{U}^{(p)}=\bm{M}^{\dagger(p)}\bm{M}^{(p)}$ where $\bm{D}$ is diagonal and $\bm{U}^{(p)}$ is unitary.
    \item Define the normalized operators \begin{equation}
        \hat{A}_p^{(j)}=
        \begin{cases}
        \sum_m\frac{1}{\sqrt{D^{(j)}_p}}U^{(j)}_{pm}A_m^{(j)} & \text{if $D^{(j)}_p>0$}\\
        0 & \text{otherwise.}
        \end{cases}
    \end{equation}
    \item Obtain the matrix $\bm{\Delta}^{(p)}$ with coefficients $\Delta_{km}^{(p)}={\rm Tr}(\rho \mathcal{L}_p [\hat{A}_k^{(p)}]\hat{A}^{\dagger(p)}_m) $.
    \item Update 
    \begin{eqnarray}
        \{A^{(p+1)}_k\}_{k=1}^{N_{\rm op}} &=&\{\mathcal{L}_p \hat{A}^{(p)}_k\}_{k=1}^{N_{\rm op}},\\
        \mathcal{L}_{p+1}[*]&=&\left(\mathcal{L}_{p}[*]-\sum_{k=1}^{N_{\rm op}}{\rm Tr}(\rho\mathcal{L}_p[*]\hat{A}_k^{\dagger(p)})\hat{A}_k^{(p)}\right).
    \end{eqnarray}
    \item Go back to step 1 with $p\rightarrow p+1$.
\end{enumerate}
\noindent \textbf{Construct the iterated map}
\begin{equation}
\bm{\hat{\mathcal{G}}}_{n}=\bm{s}^\omega_0\circ \bm{\hat{s}}_1^\omega \circ \dots \circ \bm{\hat{s}}_{n-1}^\omega\circ \bm{\hat{s}}_{n}^\omega(0) \quad \mbox{with} \quad \bm{\hat{s}}^\omega_j(\bm{X} )=- \left[\omega+\bm{\Delta}^{(j)} +\bm{M}^{\dagger(j+1)}\bm{X}\bm{M}^{(j+1)}\right]^{-1},
\end{equation}
to generate the approximation of $n$ levels to $\bm{\hat\mathcal{{G}}}$ with matrix elements $\hat{\mathcal{G}}_{ab}(\omega)=\int_0^\infty{\rm Tr}(\rho \hat{A}_a(t)\hat{A}_b)$ obtained from the {\it normalized} initial set of operators.

Equivalently, the iterated map
\begin{equation}
\bm{\mathcal{G}}_n=\bm{s}^\omega_0\circ \bm{s}_1^\omega \circ \dots \circ \bm{s}_{n-1}^\omega\circ \bm{\hat{s}}_{n}^\omega(0)\quad \mbox{with} \quad \bm{s}^\omega_j(\bm{X} )=- \bm{M}^{\dagger(j)}\left[\omega+\bm{\Delta}^{(j)} +\bm{X}\right]^{-1}\bm{M}^{(j)},
\end{equation}
generates an approximation  of $n$ levels to the {\it unnormalized} matrix Green's function $\bm{\mathcal{G}}$ obtained from the initial set of operators $\{A_j\}_{j=1}^{N_{\rm op}}$.
\hrule
\,\newline

Using a single operator in the initial set $S_0=\{A\}$, it is possible to simplify the previous results, opening the possibility of using the extensive results about orthogonal polynomials to derive a convergence proof of the continued fraction approach. In the single operator case we define $\mathcal{G}(\omega):=\mathcal{G}_{AA^\dagger}(\omega)$, which is defined as the infinitely iterated M\"{o}bius transformation $\mathcal{G}(\omega)= \lim_{k\rightarrow\infty} \mathcal{G}_k(\omega)$ where
\begin{equation}\label{eq:cont_frac_mob}
    \mathcal{G}_k(\omega)=s^\omega_0\circ s_1^\omega \circ \dots \circ s_{k-1}^\omega\circ s_k^\omega(0)\quad \mbox{and} \quad s^\omega_j(z)=\frac{-(\Gamma^{(j)})^2}{z+\Delta^{(j)}+\Gamma^{(j)}\omega}.
\end{equation}
The proof of \cref{theo:approx_corr} then proceeds as follows:
\subsection{Proof of \cref{theo:approx_corr}: Continued fraction representation, single operator}\label{sec:single_op_proof}

\begin{proof}

We want to find the error in approximating $\mathcal{G}(\omega)= \lim_{k\rightarrow\infty} \mathcal{G}_k(\omega)$ with $\mathcal{G}_n(\omega)$ for finite $n$. 
Adapting the results from Merkes \cite{Merkes1966} we find that if $\forall n\geq 2$
\begin{equation}\label{eq:condition}
    \left|\frac{(\Gamma^{(n)})^2}{(\Gamma^{(n)}\omega+\Delta^{(n)})(\Gamma^{(n-1)}\omega+\Delta^{(n-1)})}\right|\leq r(1-r)\quad \mbox{with} \quad r\in(0,\frac{1}{2}),
\end{equation}
the {\it posteriory bound} is satisfied
\begin{equation}\label{eq:post_bound}
    |\mathcal{G}({\omega})-\mathcal{G}_n({\omega})|\leq \frac{r}{1-2r}|\mathcal{G}_n({\omega})-\mathcal{G}_{n-1}({\omega})|.
\end{equation}

To bound the difference $|\mathcal{G}_n({\omega})-\mathcal{G}_{n-1}({\omega})|$, we recall that the numerator $A_n(\omega)$ and denominator $B_n(\omega)$ of continued fractions $\mathcal{G}_n(\omega)=\frac{A_n(\omega)}{B_n(\omega)}$ defined by \cref{eq:cont_frac_mob} satisfy the recurrence relation
\cite{Koornwinder2013}
\begin{equation}\label{eq:rec_rel}
    Y_{n+1}(\omega) = (\Gamma^{(n+1)}\omega+\Delta^{(n+1)})Y_{n}(\omega)-(\Gamma^{(n+1)})^2Y_{n-1}(\omega) \quad \mbox{for}\quad n\geq 1,
\end{equation}
with $Y_n(\omega)=A_n(\omega), B_n(\omega)$ and initial conditions $A_{-1}(\omega)=0$, $B_{-1}(\omega)=1$ and $A_{0}(\omega)=-(\Gamma^{(0)})^2$, $B_0(\omega)=\Gamma^{(0)}\omega+\Delta^{(0)}$. Thus we have 
\begin{align}
    \mathcal{G}_n({\omega})-\mathcal{G}_{n-1}({\omega})&=\frac{1}{B_n(\omega)B_{n-1}(\omega)}\left|\begin{array}{cc}
    A_n(\omega) & A_{n-1}(\omega) \\
    B_n(\omega) & B_{n-1}(\omega) \end{array}\right|\\&=\frac{|\Gamma^{(n)}|^2}{B_n(\omega)B_{n-1}(\omega)}\left|\begin{array}{cc}
    A_{n-1}(\omega) & A_{n-2}(\omega) \\
    B_{n-1}(\omega) & B_{n-2}(\omega) \end{array}\right|,
\end{align}
where we have used the recursion relation \cref{eq:rec_rel} and the properties of the determinant. This leads to
\begin{equation}\label{eq:suc_bound}
    \left|\mathcal{G}_n(\omega)-\mathcal{G}_{n-1}(\omega)\right|=\frac{\prod_{j=0}^n|\Gamma^{(j)}|^2}{|B_n(\omega)B_{n-1}(\omega)|},
\end{equation}

While $\Gamma^{(n)}$ is always a real number (moreover $\Gamma^{(n)}={\rm Tr}(\rho A^{(n)}A^{(n)\dagger})\geq 0$), $\Delta^{(n)}$ is real only when the Liouvillian $\mathcal{L}_H$ operator is Hermitian. With this condition (satisfied when $[\rho, H]=0$), the denominator $B_n(\omega)$ is an orthogonal polynomial, meaning that there exists a measure function $\mu>0$  on $\mathbb{R}$ such that the polynomials $B_n(\omega)$ are orthogonal with respect to $\mu$, due to the three-term recursion relation \cref{eq:rec_rel} \cite{chihara2014introduction}. Let's recall that an orthogonal polynomial of degree $n$ has exactly $n$ real roots. We can use this to finally bound \cref{eq:suc_bound}. Noting that the largest power of $\omega$ in $B_n(\omega)$ has a factor $\prod_{j=0}^n\Gamma^{(j)}$, it follows that we can write $B_n(\omega)$ as a product of its roots and find
\be
B_n(\omega)=\left(\prod_{j=0}^n\Gamma^{(j)}\right)|(\omega-\omega_0)(\omega-\omega_1)\dots(\omega - \omega_n)|\geq \Lambda^{n+1}\prod_{j=0}^n\Gamma^{(j)}
\ee
for frequencies in the region ${\rm Im}(\omega)\geq\Lambda$, thus leading to
\begin{equation}\label{eq:suc_bound_final}
    \left|\mathcal{G}_n(\omega)-\mathcal{G}_{n-1}(\omega)\right|\leq\frac{\Gamma^{(n)}}{\Lambda^{2n+1}}.
\end{equation}
To use this result to bound the error on the approximation \cref{eq:post_bound} to the infinite continued fraction, we need to choose 
\be
\Lambda \geq \left[\frac{|\Gamma^{(n)}|}{|\Gamma^{(n-1)}|r(1-r)}\right]^{1/2},
\ee
such that \cref{eq:condition} is satisfied. This leads to 
\be
\left|\mathcal{G}({\omega})-\mathcal{G}_{n}({\omega})\right|\leq\frac{r}{1-2r}\Gamma^{(n)}\left|r(1-r)\frac{\Gamma^{(n-1)}}{\Gamma^{(n)}}\right|^{n+\frac{1}{2}},
\ee
the parameter $r\in(0,\frac{1}{2})$ could be optimised for a given instance. Note that $r\rightarrow 0$ improves the error of the approximation but reduces the region in the complex plane where the approximation is useful. On the other hand, $r\rightarrow \frac{1}{2}$ reduces $\Lambda$ but makes the approximation error blow up. This result proves \cref{theo:approx_corr}.
\end{proof}

These results confirm that an approximation of the MBGFs with the correct analytical structure is possible by relaxing the precision where its peaks can be resolved. If the resolution is too poor, the approximation, although still valid, could be of little use. In order to test the predictive power of this method, in the next section, we perform numerical simulations of the algorithm presented in \cref{sec:algorithm} for some examples.

\section{Results}\label{sec:Results}

\subsection{Perfect execution of the algorithm}\label{sec:perfect}

In this section, we numerically test the continued fraction algorithm presented in \cref{sec:algorithm} and verify the exponential improvement of the error with increasing continued fraction level established in \cref{sec:CFM}. We test the algorithm in the one-dimensional Hubbard model \cref{eq:FH_Ham} and compare the results against exact diagonalization.
The $N$-site Fermi-Hubbard model is defined as:
\begin{equation}\label{eq:FH_Ham}
    H = -t \sum_{i=0}^{N-2} \sum_{\sigma = \uparrow, \downarrow}(a_{i\sigma}^{ \dagger}a_{i+1 \sigma}+a_{i+1 \sigma}^{\dagger}a_{i \sigma}) + U\sum_{i=0}^{N-1}a_{i \uparrow}^ {\dagger}a_{i \uparrow} a_{i \downarrow}^{ \dagger}a_{i \downarrow},
\end{equation}
where $a_{i{\sigma}}$ $(a_{i{\sigma}}^\dagger)$ denotes the fermionic annihilation (creation) operator on site $i$ of a particle with spin $\sigma$. Here $t$ represents the strength of the hopping terms, while $U$ parameterises the on-site interaction. 

In the presence of a symmetry of the Hamiltonian, the set $S_0$ that defines the pool of operators from where $\mathcal{G}_{ij}(\omega)$ is constructed could transform non-trivially. For a discrete symmetry, realised through the unitary transformation $U_s$, it is always possible to append new operators to $S_0$ such that $S_0$ becomes a {\it normal} operator set with respect to the symmetry, i.e., $A_i^s:=U_sA_iU_s^\dagger\in S_0$ for any $A_i$. For a normal set $S_0$, the MBGFs satisfy the symmetry relation $\mathcal{G}^s_{ij}(\omega):=\int_{0}^{\infty}dt e^{i\omega t} \Tr(\rho A^s_i(t)A_j^{s\dagger})=\mathcal{G}_{ij}(\omega)$, as long as $\rho$ does not transform under the symmetry. We note that by this procedure, the number of measurements can be reduced, as we could {\it  assume} that the original set $S_0$ is a normal operator set, and use the relation $\mathcal{G}^s_{ij}=\mathcal{G}_{ij}$ to obtain the MBGF of the operators $A_i^s$ by using one of the operators $A_i$. On the other hand, we could construct the set $S_0$ such that it is normal, and perform all the measurements needed, increasing the measurement cost, but use the symmetry relation to mitigate errors in near term computations.

We study the real part of the MBGF defined as
\begin{equation}
     {\rm Re}(\mathcal{G}_{ij}(\omega)) = \int_{0}^{\infty}dt {\rm Re}(e^{i\omega t} \Tr(\rho A_i(t)A_j^\dagger)),
\end{equation}
where $A_i\in\{a_i\}_{i=0}^{N-1}$. For concreteness, we take the state $\rho$ as the projector onto the ground-state $\rho=|{\rm GS}\rangle\langle{\rm GS}|$ at half filling. As the real part of the MBGF satisfies $\Re(\mathcal{G}_{ij}(\omega)) = \Re(\mathcal{G}_{ji}(\omega))$, if $\comm{\rho}{H} = 0$ (see \cref{sec:results_app}), we concentrate in $\mathcal{G}_{ij}(\omega)$, for $i\leq j$. The particular Hamiltonian of \cref{eq:FH_Ham} is also invariant under reflection around the middle bond (for $N$ even) or the middle site (for $N$ odd), so we can further restrict by using the symmetry $\mathcal{G}_{ij}(\omega) = \mathcal{G}_{(N-i)(N-j)}(\omega)$. 

In \cref{fig:fig1_u4}, we consider the $N=4$ site Fermi-Hubbard model, with the set of initial operators $\{A_j\}_{j=0}^{N-1} = \{a_{j \uparrow}\}$ for $U/t = 4$. We plot the real part of $\mathcal{G}_{ij}(\omega)$ for the independent $i,j$ pairs for different continued fraction levels $n=0,1,2,3$, together with the exact diagonalization result. For each case, we also plot the error between the exact signal and the approximated continued fraction in logarithmic scale to highlight the exponential improvement of the signal as the level of the continued fraction is increased. 

\begin{figure}[ht!]
    \centering
    \includegraphics[width = \linewidth]{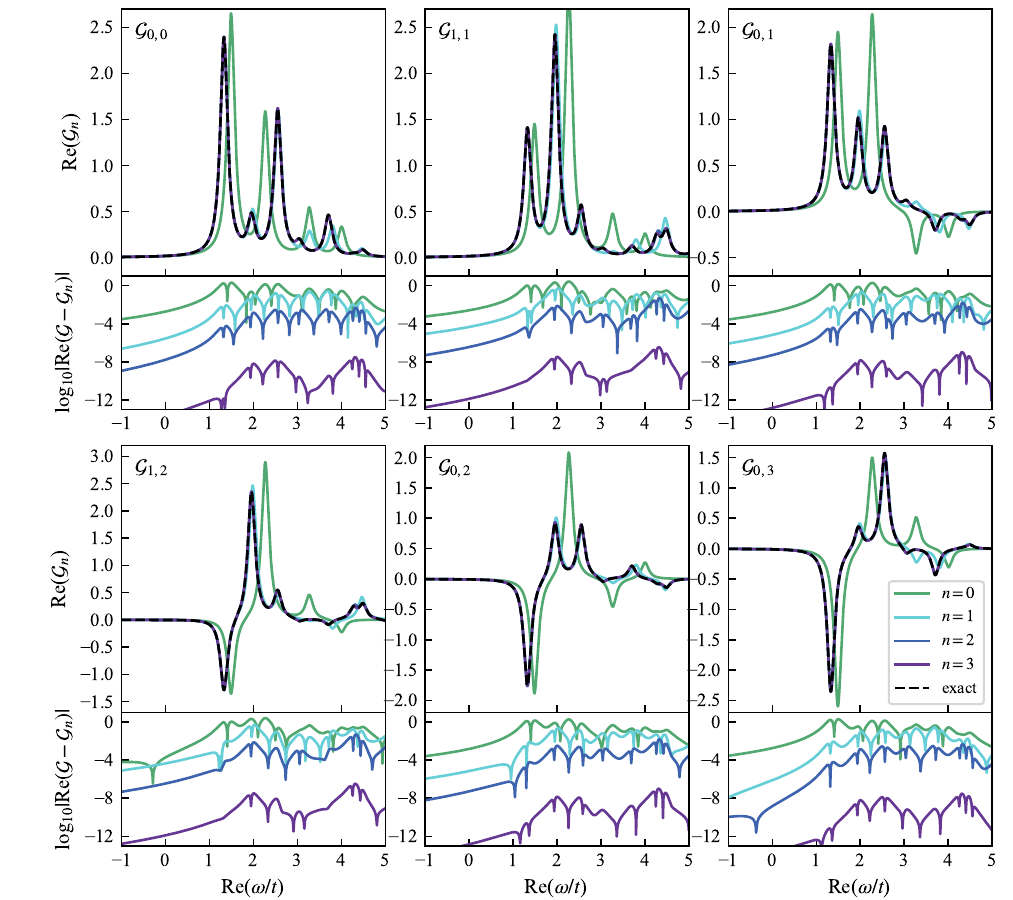}
    \caption{Approximation of the real part of the Fourier transformed MBGF obtained with the continued fraction algorithm at different truncation levels $n$ for the 4-site Fermi-Hubbard model with $U/t = 4$ with the initial operator choice $S_0= \{a_{j \uparrow}\}_{j=0}^{N-1}$. We plot the results for $\omega=\omega_0+i\eta$ with $\eta = 0.1$ and compare them with the MBGF function obtained using exact diagonalization. Different panels correspond to different choices of sites in $\mathcal{G}_{ij}$. Below each result for ${\rm Re}((\bm{\mathcal{G}}_n)_{ij})$, we show the error with respect to the exact result in logarithmic scale for the different truncation levels.}
    \label{fig:fig1_u4}
\end{figure}

Results for $U/t=2$ are presented in \cref{sec:results_app}, where the same conclusions apply.

\subsection{Execution on a real quantum computer}

While the algorithm presented in \cref{sec:algorithm} does not, in principle, require a quantum device, it is probably in systems beyond classical treatment where it may prove more useful. With this motivation, we test the algorithm in a small instance of the Fermi-Hubbard model, the Hubbard dimer on two sites, at half filling and $S^{\rm total}_z=0$, where we used a compact representation of the Hamiltonian 
\begin{equation}
    H_d = \frac{U}{2}(\Id_0 \Id_1 +Z_0 Z_1) - t(\Id_0 X_1 + X_0 \Id_1),
\end{equation}
in the basis $|{0_\uparrow},{0_\downarrow}\rangle,$ $|{0_\uparrow},{1_\downarrow}\rangle$,
$|{1_\uparrow},{0_\downarrow}\rangle$ and $|{1_\uparrow},{1_\downarrow}\rangle$, where $|i_\uparrow,j_\downarrow\rangle:=a^\dagger_{i\uparrow}a_{j\downarrow}|0\rangle$ and $|0\rangle$ is the trivial vacuum. With this encoding, the number operator in the first site $a_{0\uparrow}^\dagger a_{0\uparrow}=n_{0\uparrow}$ corresponds to $\frac{1}{2}(\Id_0+Z_0)\Id_1$. We use this single operator to build up the continued fraction approximation of $\mathcal{G}(\omega)=\int_0^\infty e^{i\omega t}{\rm Tr}(\rho n_{0\uparrow}(t)n_{0\uparrow}) dt$. Finding the ground state $\ket{\psi_0}$ and the associated energy $E_0$ of this model is straightforward due to its small size:
\begin{equation}
    \ket{\psi_0} = \frac{1}{\sqrt{2\mathcal{N}}}(\alpha (\ket{00}+\ket{11}) + \beta({\ket{01}+\ket{10}})), \quad E_0 = U/2 -\sqrt{U^2/4 + 4t^2},
\end{equation}
where $\alpha = 4$, $\beta = U/t + \sqrt{U^2/t^2 +16}$, $\mathcal{N} = \alpha^2 + \beta^2$. The state $\rho = \ket{\psi_0}\bra{\psi_0}$ corresponds to the ground state of the dimer, which is prepared using a circuit of the form presented in \cref{fig:dimer_circuit} and pre-optimized parameters. The experiments were conducted on the 'IBM Jakarta' quantum device. 

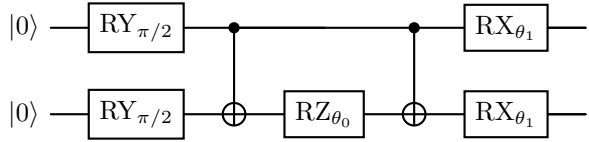
\begin{figure}[ht!]
    \centering
    \begin{quantikz}
        \lstick{$\ket{0}$} &  \gate{\text{RY}_{\pi/2}} & \ctrl{1} & \qw & \ctrl{1} & \gate{\text{RX}_{\theta_1}} & \qw\\
        \lstick{$\ket{0}$} & \gate{\text{RY}_{\pi/2}} & \targ{1} & \gate{\text{RZ}_{\theta_0}} & \targ{1} & \gate{\text{RX}_{\theta_1}} & \qw\\
    \end{quantikz}
    \caption{Circuit implementing the ground state for the Fermi-Hubbard dimer in the compact representation. The parameters $\theta_i$ depend on the relative coupling strength $U/t$. In particular, we can set $\theta_1 = \pi/4$, which leads to $\cos(\theta_0)/(1-\sin(\theta_0)) = \alpha/\beta$.}
    \label{fig:dimer_circuit}
\end{figure}
 
In \cref{fig:IBMQ}, we plot the continued fraction approximation at the level $n=4$. At this level, the algorithm is able to obtain the exact solution. The continued fraction algorithm requires to estimate the expectation of the Paulis $O_i = \{Z_0, X_0, X_0Z_1, Z_0Z_1, Y_0Y_1, X_1, Y_0, Y_0Z_1, X_0Y_1, Z_0Y_1 \}$. Further simplification of the measurement protocol can be obtained from the fact that the coefficients $\{\Gamma^{(j)}, \Delta^{(j)}\}$ must be real. Expanding the $A^{(j)}$ operator in the Pauli basis $O_i$, we find the linear equations
\begin{equation}\label{eq:linear_system}
    \Gamma^{(j)} = \sum_m g_{jm}{\rm Tr}(\rho O_m),\quad \Delta^{(j)} = \sum_m d_{jm}{\rm Tr}(\rho O_m).
\end{equation}
The reality conditions and the fact that the operators $O_i$ are hermitian, imply that \newline $\sum_m \Im{g_{jm}}{\rm Tr}(\rho O_m)=\sum_m \Im{d_{jm}}{\rm Tr}(\rho O_m)=0$, for each $j$. In the dimer example, this fixes the expectation of the operators $\{Y_0, Y_0Z_1, X_0Y_1, Z_0Y_1 \}$ to zero.
To measure each of the remaining $O_i$ at least $\mathcal{M}$ times, we need to only measure $ \{X_0X_1, X_0Z_1, Y_0Y_1, Z_0Z_1\}$ each $\mathcal{M}$ times. We perform these measurements directly by applying a suitable rotation for each of the four different basis we need to measure in. For the experimental results of \cref{fig:IBMQ}, we took $\mathcal{M}=32000$, which resulted in the total measurement cost of $4\mathcal{M} = 128000$.

\begin{figure}[ht!]
    \centering
    \includegraphics[width=\linewidth]{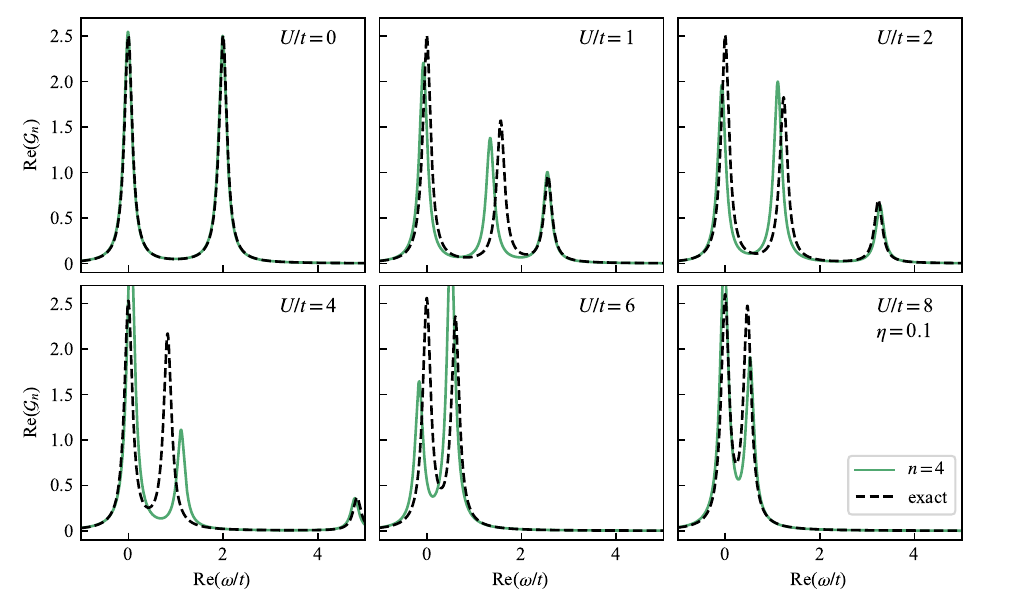}
    \caption{Approximation of the real part of the Fourier transformed MBGF obtained with the continued fraction algorithm for the 2-site Fermi-Hubbard model with the initial operator $A_0 = a_{0\uparrow}^\dagger a_{0\uparrow}$. We implement the exact ground state of the model using a pre-optimized VQE. The results are plotted for different interaction strengths $U/t = (0,1,2,4,6,8)$. Each necessary observable is measured at least 32000 times on the IBM Jakarta quantum device. We plot the results at the $n=4$ continued fraction level and for $\eta = 0.1$ and compare them with the exact MBGF. }
    \label{fig:IBMQ} 
\end{figure}

\subsection{Effect of state preparation and sampling noise}

So far, we have assumed that the desired state $\rho$ is available to us without an error. In practice, this may not be the case, particularly if $\rho$ is difficult to prepare, something that is generally QMA hard \cite{Kempe2006} for the ground state.

In this section, we will investigate the algorithm's accuracy by considering imperfect parametrization of the state $\rho$.
Let's consider a state $\tilde{\rho}:=(1-\epsilon)\rho+\epsilon \sigma$, which is a convex combination of a state $\rho$ that commutes with the Hamiltonian $H$ and some spurious state $\sigma$ that does not. The exact MBGF, being a linear functional of the state, becomes simply (from \cref{eq:GF_definition})
\begin{align}
\tilde{\mathcal{G}}_{AA^\dagger}(\omega):=-i\sum_{m,n}\frac{\langle E_m|A|E_n\rangle \langle E_n|A^\dagger\tilde{\rho}|E_m\rangle}{\omega+E_m-E_n}=(1-\epsilon)\mathcal{G}_{AA^\dagger}(\omega)-i\epsilon\sum_{m,n}\frac{\langle E_m|A|E_n\rangle \langle E_n|A^\dagger\sigma|E_m\rangle}{\omega+E_m-E_n},
\end{align}
where we see that if the error between the imperfect preparation and the exact one is $O(\epsilon)$, then $|\tilde{\mathcal{G}}_{AB}(\omega)-{\mathcal{G}}_{AB}(\omega)|=O(\epsilon)$. The continued fraction algorithm, though, relies on the condition $[\rho,H]=0$. Following the derivation of the algorithm (see \cref{app:single_op}), it is straightforward to see that the error produced by the state $\tilde{\rho}$ translates into the modified recursion for the continued fraction that defines $\mathcal{G}_{AA^\dagger}(\omega)$
\begin{align}\label{eq:tilde_G}
 i\tilde{\mathcal{G}}^{(j)}(\omega)=\frac{-(\Gamma^{(j)})^{2}}{\Gamma^{(j)}\omega+\Delta^{(j)}+i(1-\epsilon)\mathcal{G}^{(j+1)}(\omega)+i\epsilon \Sigma^{(j+1)}(\omega)}= i{\mathcal{G}}^{(j)}(\omega)\left(1+O(\epsilon)\right).
\end{align}

This relative error could compound through the recursive iterations, rendering the algorithm invalid. Here, we show that the algorithm based on a single operator is stable for a small enough deviation from a state that commutes with $H$ and sufficiently small shot noise. We also perform numerical simulations on the 4-site Fermi-Hubbard model to study the stability of the many-operator algorithm in practice. The main tool to show the algorithm's stability is a classic result in the theory of continued fractions, Theorem 3.1 in \cite{Jones1974}, that we revisit in our context.

Let us denote by $\tilde{\Gamma}^{(j)}$ and $\tilde{\Delta}^{(j)}$ the values of the estimated constants in a real experiment and the relative errors with respect to the exact $\Gamma^{(j)}$ and $\Delta^{(j)}$ by  $\gamma^{(j)}$ and $\delta^{(j)}$ such that
\begin{align}
\tilde{\Gamma}^{(j)}=\Gamma^{(j)}(1+\gamma^{(j)}), \quad \tilde{\Delta}^{(j)}=\Delta^{(j)}(1+\delta^{(j)}).
\end{align}
Here, $\delta^{(j)},\gamma^{(j)}$ parameterise the general errors that can come from imperfect estimation due to shot noise or algebraic errors due to machine error in the computation of the nested commutators. In the same way, we define the relative error $\varepsilon^{(j)}$ associated with computing $\mathcal{G}^{(j)}$ using all the $\tilde{\Gamma}^{(k)},\tilde{\Delta}^{(k)}$ for $j\leq k< n$, where $n$ is the truncation level (i.e $\mathcal{G}^{n+1}(\omega)=0$ as
\begin{align}\label{eq:G_err_gen}
i\tilde{\mathcal{G}}^{(j)}(\omega)= i{\mathcal{G}}^{(j)}(\omega)(1+\varepsilon^{(j)}(\omega)).
\end{align}
Finally a direct evaluation of $i{\mathcal{G}}^{(j)}$ using $\tilde{\Gamma}^{(j)},\tilde{\Delta}^{(j)}$ and a state that does not commute with $H$ leads to a relative error $\xi^{(j)}$, defined through
\begin{align}\label{eq:G_err_loc}
    i\tilde{\mathcal{G}}^{(j)}(\omega)=\frac{-(\tilde{\Gamma}^{(j)})^{2}(1+\xi^{(j)})}{\tilde{\Gamma}^{(j)}\omega+\tilde{\Delta}^{(j)}+i\tilde{\mathcal{G}}^{(j+1)}(\omega)}.
\end{align}
Using \cref{eq:G_err_gen} and \cref{eq:G_err_loc} leads to the following relation by applying the recursion that defines the continued fraction \cref{eq:tilde_G}
\begin{align}\label{eq:epsilon_condition}
1+\varepsilon^{(j)}(\omega)&=\frac{(1+\gamma^{(j)})^{2}(1+\xi^{(j)})}{1+\beta^{(j)}(\omega)+g^{(j)}(\omega)(\varepsilon^{(j+1)}(\omega)-\beta^{(j)}(\omega))},
\end{align}
with $g^{(j)}(\omega)=\frac{i\mathcal{G}^{(j+1)}(\omega)}{\Gamma^{(j)}\omega+\Delta^{(j)}+i\mathcal{G}^{(j+1)}(\omega)}$ and $\beta^{(j)}=\frac{\Gamma^{(j)}\omega\gamma^{(j)}+\Delta^{(j)}\delta^{(j)}}{\Gamma^{(j)}\omega+\Delta^{(j)}}$. Now we can state:
\begin{theorem}{Stability in evaluating continued fractions, Jones and Thron \cite{Jones1974}}

For each $k=1\dots n$, assume $\varepsilon^{(j)}(\omega)$ satisfies \cref{eq:epsilon_condition} with $g^{(n)}(\omega)=\varepsilon^{(n+1)}=0$. Additionally, let the non-negative number $s$ be chosen such that $\forall k\in 1\dots n$,  $|\delta^{(j)}|\leq s$, $|\beta^{(j)}(\omega)|\leq  s$, $|\xi^{(j)}|\leq s$, $|2\gamma^{(j)}+(\gamma^{(j)})^2|\leq s$ and $g^{(j)}(\omega)\leq q\leq 1$, then the relative error associated with the desired continued fraction $\mathcal{G}^{(0)}(\omega)$ satisfies
\begin{align}
|\varepsilon^{(0)}|\leq \frac{6s}{1-q},
\end{align}
and the algorithm is stable.
\end{theorem}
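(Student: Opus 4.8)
The plan is to prove the bound by a downward induction on the level index $j$, running from the truncation level $j=n$ (where $\varepsilon^{(n+1)}=g^{(n)}(\omega)=0$ by construction) down to $j=0$, treating every quantity as a complex function of $\omega$ and propagating the estimates with the triangle inequality. The first move is to rewrite the master recursion \cref{eq:epsilon_condition} so that $\varepsilon^{(j)}$ appears as a ratio of small perturbations of unity. Expanding the numerator as $(1+\gamma^{(j)})^2(1+\xi^{(j)})=1+\nu^{(j)}$ with $\nu^{(j)}:=(2\gamma^{(j)}+(\gamma^{(j)})^2)+\xi^{(j)}\bigl(1+2\gamma^{(j)}+(\gamma^{(j)})^2\bigr)$, and regrouping the denominator as $1+\beta^{(j)}+g^{(j)}(\varepsilon^{(j+1)}-\beta^{(j)})=1+\mu^{(j)}$ with $\mu^{(j)}:=\beta^{(j)}(1-g^{(j)})+g^{(j)}\varepsilon^{(j+1)}$, \cref{eq:epsilon_condition} collapses to $1+\varepsilon^{(j)}=(1+\nu^{(j)})/(1+\mu^{(j)})$, and hence
\begin{equation}
\varepsilon^{(j)}=\frac{\nu^{(j)}-\mu^{(j)}}{1+\mu^{(j)}},\qquad |\varepsilon^{(j)}|\le\frac{|\nu^{(j)}|+|\mu^{(j)}|}{1-|\mu^{(j)}|}\quad\text{whenever }|\mu^{(j)}|<1.
\end{equation}

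The hypotheses now yield uniform estimates on the two perturbations: $|\nu^{(j)}|\le 2s+s^2$ follows from $|2\gamma^{(j)}+(\gamma^{(j)})^2|\le s$ and $|\xi^{(j)}|\le s$, while $|\mu^{(j)}|\le s(1+q)+q\,|\varepsilon^{(j+1)}|\le 2s+q\,|\varepsilon^{(j+1)}|$ follows from $|\beta^{(j)}(\omega)|\le s$ and $|g^{(j)}(\omega)|\le q\le 1$. Writing $e_j:=|\varepsilon^{(j)}|$, the whole problem is thereby reduced to the scalar recursive inequality $e_j\le (4s+s^2+q\,e_{j+1})/(1-2s-q\,e_{j+1})$ with $e_{n+1}=0$. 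I would close this by induction with the hypothesis $e_j\le 6s/(1-q)$. The base case $j=n$ is clean because $g^{(n)}=0$ forces $\mu^{(n)}=\beta^{(n)}$, so $|\mu^{(n)}|\le s$ and $e_n\le (3s+s^2)/(1-s)$, which is below $6s/(1-q)$ as soon as $s$ is small. For the inductive step, substituting the hypothesis $e_{j+1}\le 6s/(1-q)$ gives $|\mu^{(j)}|\le 2s+6sq/(1-q)=(2s+4sq)/(1-q)$; inserting this into the displayed bound for $e_j$ and clearing denominators reduces the target inequality $e_j\le 6s/(1-q)$ to a single scalar inequality of the shape $s\,\bigl[\,2(1-q)^2-s\,c(q)\,\bigr]\ge 0$ with $c(q)$ bounded for $q\in[0,1]$, which holds for every $s$ small enough relative to $(1-q)^2$ — precisely the regime in which the statement is non-vacuous. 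Evaluating at $j=0$ gives $|\varepsilon^{(0)}|\le 6s/(1-q)$.

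The conceptual core of the argument, and the reason the assumption $q<1$ is indispensable, is that the term $g^{(j)}\varepsilon^{(j+1)}$ inside $\mu^{(j)}$ feeds the relative error of level $j+1$ back into level $j$: each downward step multiplies the carried error by a factor of modulus at most $q$ and adds a fresh contribution of order $s$, so the accumulated error is governed by the geometric series $O(s)(1+q+q^2+\cdots)=O(s)/(1-q)$, uniformly in the depth $n$. The only genuine obstacle is to guarantee that $|\mu^{(j)}|$ stays bounded strictly away from $1$ at every level, since that is what keeps the ratio $(\nu^{(j)}-\mu^{(j)})/(1+\mu^{(j)})$ under control; this is exactly what ties the admissible size of $s$ to $1-q$, and it is the substance of Theorem~3.1 of Jones and Thron \cite{Jones1974}, which we are here specialising to the coefficients $\beta^{(j)},\gamma^{(j)},\delta^{(j)},\xi^{(j)}$ generated by the MBGF continued fraction.
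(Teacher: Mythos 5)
The paper does not actually prove this statement: it is quoted from Theorem~3.1 of Jones and Thron \cite{Jones1974} and used as a black box, so your downward induction is a genuine reconstruction rather than a rival to an argument in the text. Your skeleton is the right one and is essentially the canonical Jones--Thron argument: rewriting the recursion as $1+\varepsilon^{(j)}=(1+\nu^{(j)})/(1+\mu^{(j)})$ with $|\nu^{(j)}|\le 2s+s^2$ and $|\mu^{(j)}|\le 2s+q|\varepsilon^{(j+1)}|$ is correct, as is the resulting scalar recursion, the clean base case at $j=n$, and the identification of $q<1$ as the damping factor that turns the accumulated error into a geometric series $O(s)/(1-q)$ uniformly in the depth $n$. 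The one place you overclaim is the closing step. Carrying the hypothesis $e_{j+1}\le 6s/(1-q)$ through your recursion $e_j\le(4s+s^2+qe_{j+1})/(1-2s-qe_{j+1})$ and clearing denominators reduces the target inequality to
\begin{equation}
s\left[(1-q)^2+12+24q\right]\le 2(1-q)^2,
\end{equation}
i.e.\ $s\lesssim (1-q)^2/18$ as $q\to 1$. That is strictly stronger than the condition $s\lesssim(1-q)/6$ under which the conclusion $|\varepsilon^{(0)}|\le 6s/(1-q)$ is non-vacuous, so your remark that the required smallness of $s$ holds ``precisely in the regime in which the statement is non-vacuous'' is not accurate. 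This is not fatal --- the theorem as transcribed in the paper omits the explicit smallness hypotheses on $s$ that the original Jones--Thron result carries, and any proof must impose some such condition --- but you should either state explicitly the condition $s=O((1-q)^2)$ that your induction actually uses, or sharpen the induction hypothesis (the recursion really produces a leading constant $(4+2q)/(1-q)\le 6/(1-q)$, so carrying $e_j\le (4+2q)s/(1-q)$ plus a controlled $O(s^2/(1-q)^2)$ correction closes the argument under the weaker and more natural assumption $s=O(1-q)$).
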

Theorem 4.1 in \cite{Jones1974} indicates that it is always possible to choose the imaginary part of $\omega$ large enough such that $g^{(j)}(\omega)$ satisfies the required condition. Note that, assuming a perfect quantum algorithm (i.e. no hardware and sampling noise so $\gamma^{(j)}=\delta^{(j)}=\beta^{(j)}=0$), but an initial state with infidelity $\epsilon$, this theorem tells us that the error of the algorithm is controlled by the infidelity, as the parameter $s$ is assumed to be upper-bounding $\xi^{(j)}$ which becomes equal to the error in the continued fraction at level $j$, $\varepsilon^{(j)}$ in this scenario.

We also investigate numerically this stability. In particular, to study the stability of the algorithm with respect to a state that does not exactly commute with the Hamiltonian, we use $\rho_k$ that is prepared with a $k$-layer VQE circuit as in \cite{stanisic2022observing}. The circuit diagram of the ansatz is shown in \cref{fig:VQE_circuit}.
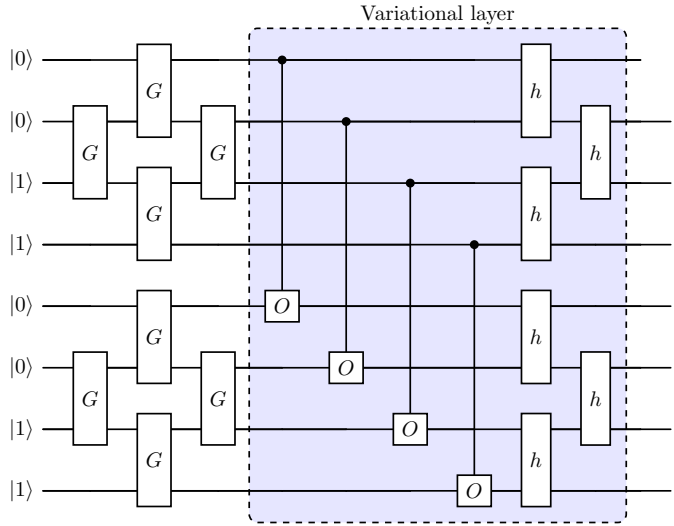
\begin{figure}[ht!]
\centering
\begin{adjustbox}{width=0.5\textwidth}
\begin{quantikz}
\lstick{$\ket{0}$}& \qw & \gate[2]{G} & \qw & \ctrl{4}\gategroup[8,steps=6,style = {dashed, rounded corners, fill = blue!10}, background]{Variational layer} & \qw & \qw & \qw & \gate[2]{h} & \qw & \qw \\
\lstick{$\ket{0}$} & \gate[2]{G} & \qw      & \gate[2]{G} & \qw & \ctrl{4} & \qw & \qw & \qw  &  \gate[2]{h} & \qw & \\
\lstick{$\ket{1}$} & \qw & \gate[2]{G} & \qw         & \qw      & \qw  & \ctrl{4} & \qw & \gate[2]{h} & \qw & \qw & \\
\lstick{$\ket{1}$} & \qw & \qw         & \qw         & \qw      & \qw & \qw & \ctrl{4} & \qw &  \qw & \qw & \\
\lstick{$\ket{0}$}& \qw & \gate[2]{G} & \qw        & \gate{O} & \qw & \qw & \qw & \gate[2]{h}  & \qw & \qw & \\
\lstick{$\ket{0}$} & \gate[2]{G} & \qw      & \gate[2]{G} & \qw & \gate{O} & \qw & \qw & \qw & \gate[2]{h} & \qw & \\
\lstick{$\ket{1}$} & \qw & \gate[2]{G} & \qw         & \qw      & \qw & \gate{O} & \qw & \gate[2]{h} & \qw & \qw & \\
\lstick{$\ket{1}$} & \qw & \qw         & \qw         & \qw      & \qw & \qw & \gate{O} & \qw & \qw & \qw & 
\end{quantikz}
\end{adjustbox}
\caption{Quantum circuit showing an instance of a VQE ansatz for a $N = 4$ site Fermi-Hubbard model at half-filling with one variational layer. We denote by $G$: Givens rotations; $O$: onsite (interaction) gates; $h$: hopping gates. Onsite and hopping gates correspond to time evolution according to the onsite and hopping terms, respectively. All onsite terms have the same time parameter, and all hopping terms occurring in parallel have the same time parameter. The first four qubits represent spin-up modes, and the last four represent spin-down modes. More details on the ansatz are given in \cite{stanisic2022observing}.}
\label{fig:VQE_circuit}
\end{figure}

 With this state, we evaluate the MBGF through the continued fraction algorithm, starting from the operator set $\{A_j\}_{j=0}^{N-1} = \{a_{j \uparrow}\}$ as shown in \cref{fig:VQE_a} (see also \cref{fig:VQE_n}). In this case, we observe a reasonable agreement with the exact behaviour for shallow VQE levels. The agreement improves as the fidelity increases, and the VQE becomes a better representation of the true ground state. It remains an open question of how fidelity has to scale with the system size to obtain a good agreement.

\begin{figure}[ht!]
    \centering
    \includegraphics[width = \linewidth]{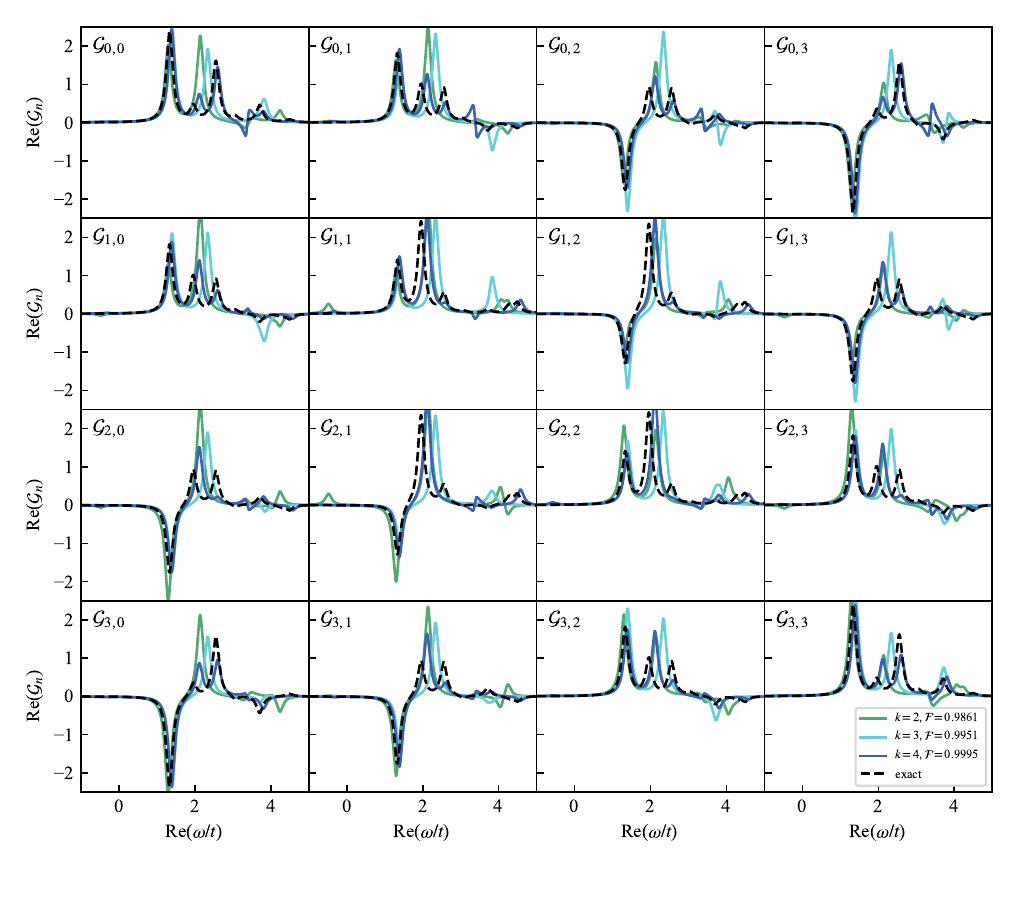}
    \caption{Approximation of the real part of the Fourier transformed MBGF obtained with the continued fraction algorithm for the 4-site Fermi-Hubbard model with $U/t = 4$ with the initial operator choice $\{A_j\}_{j=0}^{N-1} = \{a_{j \uparrow}\}$ and the ground state corresponding to the best VQE state of depth $k$ (\cref{fig:VQE_circuit}). We plot the results for $\eta = 0.1$ and compare them with the MBGF obtained using exact diagonalization. For each VQE depth $k$, we display the associated fidelity  $\mathcal{F}$ compared to the exact ground state. When performing the analysis, we restrict the $\bm{R}^{(j)}$ and $\bm{\Delta}^{(j)}$ matrices to being real.}
    \label{fig:VQE_a}
\end{figure}

In order to assess the effect of shot noise and algebraic manipulation of the noisy data in the many-operator based algorithm, we simulate the MBGF obtained from the sets $S=\{a_{0\uparrow},a_{1\uparrow},a_{2\uparrow}\}$ and truncation level $n=0$, (so just a single level of the matrix continued fraction) and $S'=\{a_{0\uparrow},a_{1\uparrow}\}$ with $n=1$, i.e. two levels of the iteration. Each iteration in this algorithm requires the inversion of the matrix of overlaps $\bm{R}$, defined in \cref{sec:algorithm}. This inversion process could be sensitive to the error from the finite number of samples, as has been found in other studies \cite{Epperly2022,lee2023sampling}. \cref{fig:shot_noise} illustrates this trend, as more iterations generate larger error as a function of the real part of the frequency. It is worth noting that a rather large number of measurements are required to obtain a small error. This indicates that for successful use of the continued fraction algorithm with real quantum data, it may be necessary to incorporate some thresholding in the matrix $\bm{R}$ to minimize errors coming from an ill-defined classical matrix of overlaps. We do not explore this question further in this work.

\begin{figure}[ht!]
    \centering
    \includegraphics[width = \linewidth]{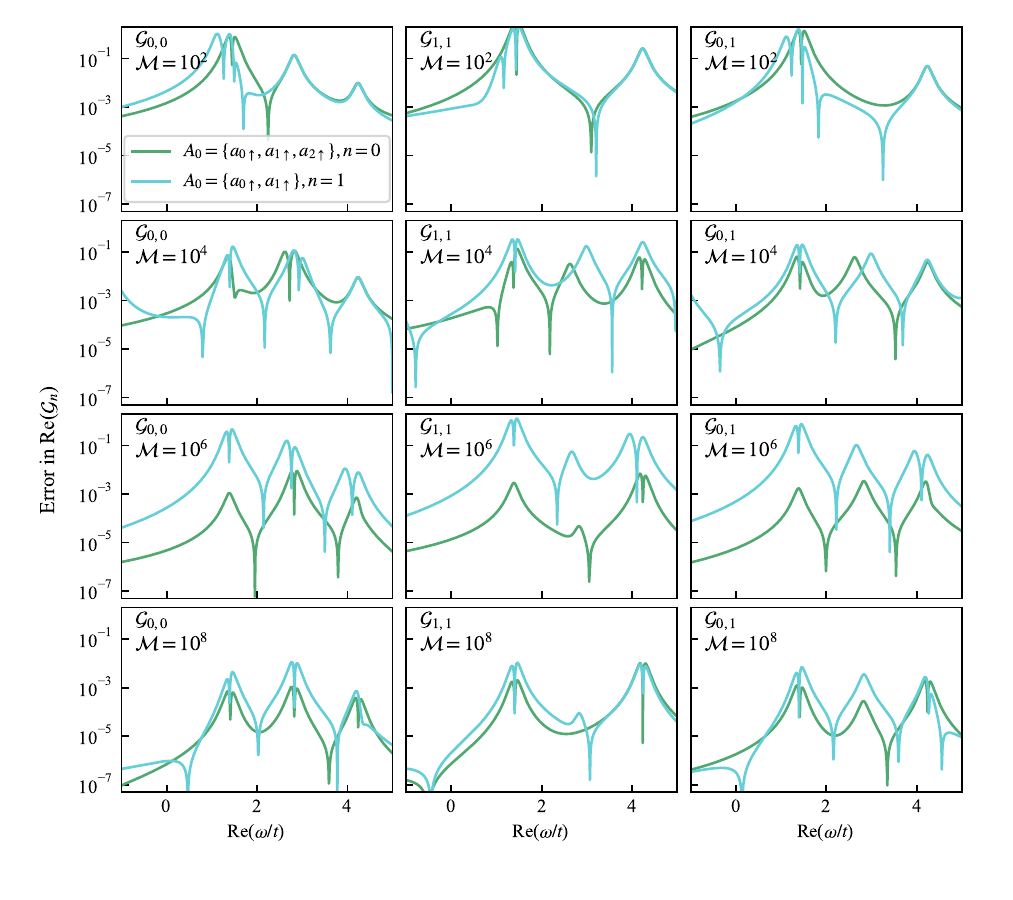}
    \caption{Approximation error of the real part of the MBGF for the $N=3$ site Fermi-Hubbard model with $U/t=2$ for 2 cases - $A_0 = \{a_{0 \uparrow}, a_{1 \uparrow}, a_{2 \uparrow}\}$ and $n=0$, and $A_0 = \{a_{0 \uparrow}, a_{1 \uparrow}\}$ and $n=1$. In both cases, the level $n$ is sufficient to obtain the exact solution. We investigate the response of the error to the statistical noise arising from the finite number of measurements $\mathcal{M}$ per each observable in the protocol.}
    \label{fig:shot_noise}
\end{figure}

\section{Discussion}\label{sec:Discussion}

The protocol that we discussed in this work, both for the single and many-operator case, can be implemented with access to a quantum computer by first classically obtaining the necessary observable values to be estimated at the state of interest, measuring them on the quantum computer, and classically post-process them to estimate the operators at the next stage of the continued fraction iteration. This process is then repeated $n$ times to reach the desired accuracy. This procedure does not require extra ancillas or controlled operations.

The results presented in the previous sections show that the continued fraction approximation to the MBGF converges exponentially in a region of the complex frequency above the real axis. Although we can prove this, the bounds we obtain are probably not tight. However, even considering a small imaginary part $\eta=0.1$ (in terms of the norm of the Hamiltonian) in \cref{sec:perfect}, we are able to reproduce the exact signal. It remains a question for future research if the approximation bounds presented here can be tightened.

It could be entirely possible, though, that in particular systems, an approximation based on a finite number of levels would only represent a good approximation to the real MBGF for a value of $\Im(\omega)$ that washes away the distinctive features of the signal. It is clear that if the resolution required is very high, for example, to try to resolve individual eigenvalues in a system where the gaps are closing exponentially fast with the system size, either an exponentially small $\Im(\omega)$ may be necessary or a large number of iterations. Nevertheless, we think this does not present an actual problem in real physical situations, where exponential resolution cannot be obtained anyhow on any experiment running during polynomial time. Moreover, based on extensive numerical evidence \cite{viswanath1994recursion}, it has recently been hypothesized \cite{Parker2019} that the generic growth of the $\Gamma^{(n)}$ coefficients of the single operator approach is linear with $n$, for generic Hamiltonians $H$ and operator $A$ in the thermodynamic limit. Under this hypothesis, the value of $\Lambda$ in \cref{theo:approx_corr} is finite for large $n$.

Among the desired properties of the algorithm is that by being a composition of M\"{o}bius maps with real coefficients, any truncation preserves the Nevalinna structure to the exact MBGF. This preserves the interpretation of the real part of the diagonal MBGF ($\Re{\mathcal{G}_{A_jA^\dagger_j}(\omega)}$) as density of states versus frequency of the excitation $A_j$ in the state $\rho$.

We showed, based on classical results on scalar continued fractions, that the single-operator based algorithm is stable to small perturbations of the constants $\Delta^{(j)},\Gamma^{(j)}$, naturally associated with sampling noise, and errors in the computation of the continued fraction, that are more closely related with defective state preparation. Through numerical simulations on small systems, we see that this stability could be achieved in the many-operator case but for either rather large fidelity with the true ground state \cref{fig:VQE_a}, or with a large number of shots \cref{fig:shot_noise} (see also \cref{fig:VQE_n} and \cref{fig:app_densities}). 
The error that stems from an approximation to a state with $[H,\rho]=0$ is not surprising, as the whole Hilbert space foliation that defines the algorithm is obtained from the inner product constructed from the state $\rho$, and the hermiticity of the Liouvillian with respect to that state

The sensitivity of the errors in the estimates of the overlaps between operators seems to be a problem in subspace expansion algorithms and is not unique to our algorithm \cite{Epperly2022}. In usual subspace expansion approaches, this error can be controlled through appropriate regularization of the overlap matrix \cite{Epperly2022,lee2023sampling}. In the overlap matrix, we choose to neglect and project out the eigenvalues of the overlap matrix that are less than some thresholding parameter $\epsilon$. This is a similar technique to what is used in quantum Krylov subspace expansion \cite{Kirby24}. The exact formulation of the errors arising in our context is outside the scope of this work and we leave a more quantitative investigation for the future.

Practically, the main cost of executing the algorithm under the assumption of access to a good state $\rho$ comes from the number of measurements and the classical resources to compute the nested commutators.  In \cref{app:measurement}, we discuss the associated sampling cost in terms of both variants of the continued fraction algorithm. The sample complexity grows exponentially for a generic state with the subspace expansion order $n$, as each order requires the computation of a nested commutator that generates an operator with a larger operator weight. A bound on the sampling complexity can be obtained in this case using the results of classical shadows \cite{Huang_2020, Huang2021b, Zhao_2021}, where it is shown that the required number of samples to estimate an operator grows exponentially with its weight. In the important case of fermionic systems, where the operators to evaluate correspond to even products of fermions, a similar analysis can be done using the fermionic shadow protocol \cite{Zhao_2021}. In that case the problem reduces to evaluating $k$-RDMs, where $k$ increases linearly with the subspace expansion order $n$. In practice, symmetries of the system and the possitivity of certain coefficients in the subspace expansion protocol can also be exploited to reduce the measurement cost. For Gibbs states under some mild assumptions \cite{onorati2023efficient,onorati2023provably}, the expectation of Lipschitz observables \cite{De_Palma_2021} can be obtained efficiently in general for large temperatures, indicating MBGFs of states at large temperatures are more accessible. 

We envision some mitigation techniques that can be used to reduce the sampling overhead. As discussed around \cref{eq:linear_system}, it is possible to use the reality condition of the Pauli expectations to lower the number of operators to be measured. This could be further lowered by using the inherent symmetries of the system in question. Symmetries could also be used to mitigate errors, as the set of MBGFs transforms under the action of a symmetry in a way that only depends on the operators chosen, meaning that this transformation can be obtained in advance of a quantum experiment and applied classically over the retrieved data, symmetrising the noisy signal. 

Although in this work, we have concentrated on states that are projections onto the ground state, it is possible to shift the question to study MBGFs for thermal states of nonzero temperature. There is evidence that creating or approximating these states on a quantum computer is efficient \cite{chen2023local,onorati2023efficient}. It is worth highlighting the connection with non-zero temperature states and complex frequencies. A natural way where a complexification of the frequency appears in many-body physics is through the Matsubara frequencies $\omega_n={\pi Tn}$ in systems at equilibrium. Here $T$ is the temperature of the Gibbs state, and $n$ is an even (odd) integer for bosonic (fermionic) systems. In that context, the discrete frequencies appear along the purely imaginary axis, and the given thermal expectations are defined on a strip of size $1/T$ in the complex frequency plane. This connection suggests that the approximation of the MBGF presented here could be used to study systems in the imaginary time formalism for large {\it Matsubara} frequencies.

Finally, we can relax the constraint of not being able to do time evolution at all, allowing a circuit that approximates the time evolution of the operator up to some time $T$ defined by the allowed quantum resources. In this case, the simplest MBGF becomes
\begin{equation}
\mathcal{G}_{AA^\dagger}(\omega)=\int_{0}^{\infty}e^{i\omega t}{\rm Tr}(\rho A(t)A^{\dagger})=\int_{0}^{T}dte^{i\omega t}{\rm Tr}(\rho A(t)A^{\dagger})+\int_{T}^{\infty}dte^{i\omega t}{\rm Tr}(\rho A(t)A^{\dagger}).
\end{equation}
The first term on the right-hand side above can be obtained directly by measuring ${\rm Tr}(\rho A(t)A^{\dagger})$ from times $t\in[0,T]$ using e.g. the method described in \cite{Somma_2003}, which requires an extra ancilla and a controlled application of $A$ (which we assume unitary for simplicity) dependent on the state of the ancilla. Then, the time integral can be produced classically. The remaining term is dampened by the imaginary part of $\omega$ as
\begin{equation}\label{eq:extra_term_time}
    \int_{T}^{\infty}dte^{i\omega t}{\rm Tr}(\rho A(t)A^{\dagger})=e^{i\omega T}\int_{0}^{\infty}dxe^{i\omega x}{\rm Tr}(\rho U_{T}A(x)U_{T}^{\dagger}A^{\dagger}),
\end{equation}
where $U_T=e^{iHT}$. The last integral in \cref{eq:extra_term_time} could be obtained with the many-operator based algorithm discussed in this work, with the operator set $S=\{U_TAU^\dagger_T,A\}$. Note that assuming the ability to
estimate ${\rm Tr}(\rho A(t)A^{\dagger})$ directly (e.g. by using ancillas as in \cite{Somma_2003}), then the difficulty of computing $\Gamma^{(n)},\Delta^{(n)}$ for the continued fraction approach will come from the number of circuits needed, and not by their weight. This contrasts with the complexity discussed in this work, which is based on the weight of the operators increasing as the continued fraction level grows (see also \cref{app:measurement}). The difference boils down to the assumption about the way of estimating observables. We leave the investigation of the feasibility of this approach as an avenue for future work.
\section{Conclusions}\label{sec:Conclusions}

Determining the MBGF, $\mathcal{G}_{AB}(\omega)=\int_0^\infty e^{i\omega t} {\rm Tr}(\rho e^{iHt}Ae^{-iHt}B)dt$
with continued fractions has a long history in physics, starting with the seminal work of Mori \cite{Mori1965}. Our main contribution in this work is to prove that finite approximants provide good approximations to the exact MBGF in the complexified frequency space. A motivation for studying this problem is the current difficulty of producing accurate deep circuits in quantum computers due to the effect of noise that cannot be fully corrected.

We have studied a subspace expansion algorithm based on the repeated application of the Liouvilian on a set of operators of interest. This repeated operation defines a foliation of the Hilbert space into orthogonal spaces under the inner product characterized by the state $\rho$. Starting with a single operator,  we revisit the continued fraction representation of the MBGF $\mathcal{G}_{AA^\dagger}$. Truncating to a finite level allows to calculate the MBGF function with constant depth (independent of the evolution time) quantum circuits, giving provable guarantees to the quality of the approximation. This is achieved by using the theory of orthogonal polynomials in the upper complex plane and the fact that the zeroes of these polynomials occur in the real axis. In the complex frequency space, it is then always possible to choose the imaginary part of the frequency such that the method converges as a Cauchy series, thus giving approximants that approach the infinite-level continued fraction exponentially fast in the level. We also extend this result to MBGF based on many operators, where a similar structure appears, but now in terms of matrix orthogonal polynomials, which are defined in terms of a three-term recursion with matrix coefficients.

The theoretical bounds presented here are further examined through experiments on a quantum computer and numerical simulations of the one-dimensional Fermi Hubbard model. In \cref{sec:perfect} we show that a perfect application of the algorithm renders the expected results with surprisingly low values of $\Im(\omega)$, for correlations functions of creation (annihilation) operators \cref{fig:fig1_u4} and \cref{fig:fig1_u2}.

We experimentally verified the correct application of the algorithm using a compressed version of the Hubbard dimer on 'IBM-Jakarta', \cref{fig:IBMQ}, where we study the MBGF using the number operator $n_{0\uparrow}=a_{0\uparrow}^\dagger a_{0\uparrow}$. The stability of the algorithm to imperfect preparation of the state $\rho$ and sample noise is discussed by adapting the classical result of \cite{Jones1974} of stability to small perturbations of the coefficients that define the continued fraction. The stability of the many-operator based algorithm to imperfect state preparation is evaluated in numerical simulation using $\rho$ obtained by VQE with the Hamiltonian variational ansatz, at different depths, with increasing fidelity \cref{fig:VQE_a}. Here, we observe that the error decreases with increasing fidelity.

The stability to sampling noise in subspace expansion algorithms has received attention lately \cite{Epperly2022, lee2023sampling} as it could lead to a failure of the algorithm due to the inversion of the overlap matrix. In the instance that we discussed in this paper, this inversion appears due to the normalization of the operators at each iteration. We investigate the effect of finite measurements by comparing the error obtained with different operator sets in \cref{fig:shot_noise}. There, we observe that for the same number of measurements, the instance of the algorithm with fewer inversions achieves better precision. This finding is consistent with the idea that inversions are detrimental to the algorithm's performance. A theoretical understanding of this phenomenon in the context of the present work, for the algorithm based on many operators, is left for a future study.

Implementing the algorithm presented in \cref{sec:algorithm} using a quantum computer requires the measurement of operators with weight increasing with the truncation level. Although expanding these operators in some operator basis could lead to an exponential explosion of terms as the continued fraction level increases, for local Hamiltonians the maximum weight of the operators in the sum would only grow linearly with the level. The sampling complexity for such a task grows exponentially with the weight of the operator involved but only logarithmically with the number of these operators. This complexity, on the other hand, leads to an exponentially better approximation away from the $\Im(\omega)=0$ line.

 Overall, the algorithm discussed here, in both its single and many-operator versions, provides an estimation of MBGF with shallow quantum circuits with provable guarantees.

\section*{Acknowledgements}

R.S. thanks Sabrina Wang, Filippo Gambetta, Evan Sheridan, Charles Derby and Laura Clinton for helpful discussions related to the MBGFs, subspace expansions and their use in determining many-body dynamics. 

\appendix
\appendix
\section*{APPENDIX}

\section{M\"{o}bius transformations and analytic structure of MBGFs}
\label{app:Mobius_Nevalina}

A M\"{o}bius transformation in the complex $z$ plane is a rational function 
\begin{align}
f(z)=\frac{az+b}{cz+d},
\end{align}
where $a,b,c,d$ are complex numbers satisfying $(ad-bc)\neq 0$. This function maps the upper complex plane ($\Im(z)>0$) to itself iff $(ad-bc)>0$ and $a,b,c,d$ are real numbers. This follows simply from imposing that $\Im(f(z))>0$ iff $\Im(z)>0$. This means that the map $s_n^\omega(z)$ in \cref{eq:Mobius} maps the upper complex plane into itself. 
\subsection{Nevalinna structure of the MBGFs}
A Nevanlinna function is a complex function which is analytic on the open upper half-plane $h$ and has a non-negative imaginary part. As discussed in \cref{sec:structure}, through the Lehmann representation we find that the MBGFs has two properties:
\begin{itemize}
\item The MBGFS $i\mathcal{G}_{AA^\dagger}(z+i\eta)$ has poles in the lower complex plane (as $\eta>0$)
\item $i\mathcal{G}_{AA^\dagger}(z+i\eta)$ (note the imaginary unit) has a positive imaginary part (see \cref{eq:Density_of_states}). 
\end{itemize}
These two properties make the MBGF $i\mathcal{G}_{AA^\dagger}(z+i\eta)$ a Nevalinna function.

\subsection{Sokhotski–Plemelj theorem}
The Sokhotski–Plemelj theorem relates the limiting values of the Cauchy integral in the complex plane $\zeta$
\begin{align}
\phi (z)=\frac{1}{2\pi i}\int_{C}{\frac {\varphi (\zeta )\,d\zeta }{\zeta -z}},
\end{align}
as $z$ approaches points in the closed curve $C$. It reads
\begin{align}
    \lim _{w\to z}\phi _{i}(w)={\frac {1}{2\pi i}}{\mathcal {P}}\int _{C}{\frac {\varphi (\zeta )\,d\zeta }{\zeta -z}}+{\frac {1}{2}}\varphi (z),\\
    \lim _{w\to z}\phi _{o}(w)={\frac {1}{2\pi i}}{\mathcal {P}}\int _{C}{\frac {\varphi (\zeta )\,d\zeta }{\zeta -z}}-{\frac {1}{2}}\varphi (z),
\end{align}
where the subindices $i,e$ denote the limit of approaching the curve from the inside or from the outside. Here $\mathcal{P}$ is the Cauchy principal value. In physics, in the context of Green's functions these relations are called Kramers-Kronig relations \cite{fetter2012quantum}. When the curve $C$ corresponds to the real axis, the curve is considered closed in the Riemann sphere.

\section{GNS construction}\label{app:GNS}

Here we discuss the idea behind the GNS construction. We aim to keep this discussion simple and intuitive. For formal results please refer to \cite{GelNeu43,Segal,doran1994c}.
 We want to define an inner product between operators based on some density matrix. Doing this directly could be problematic, as the density matrix could be such that it does not define a positive definite map. Consider for example a state $\rho$ in a given fixed sector of some symmetry. Operators that do not live in the same symmetry sector will have zero overlap with $\rho$. This means that the map defined by $\rho$ as
\begin{align}\label{eq:map}
    (A,B)_\rho:={\rm Tr}(\rho AB^\dagger),
\end{align}
does not define a norm, as $(A,A)$ may vanish for some operator different than the trivial $A=0$. The initial Hilbert space and the map \cref{eq:map} define a $C^*$ algebra. Let's call it $\mathcal{A}$ . The GNS construction is a way of constructing a proper Hilbert space from $\mathcal{A}$ and the map \cref{eq:map}. The main idea is that we can define a quotient space $\mathcal{H}=\mathcal{A}/{\mathcal{S}}$ between the original $C^*$ algebra and the subspace $\mathcal{S}\subset \mathcal{A}$ of $s\in \mathcal{S}$ operators such that
$(s,s)_\rho=0$. The inner product in $\mathcal{H}$ is defined using \cref{eq:map}, but over equivalence classes, where all elements in $\mathcal{S}$ correspond to the same equivalence class as the trivial $s=0$ element. In the works of Gelfand, Naimark and Segal, this construction is introduced and it is proved that it does make $\mathcal{H}$ a Hilbert space.

\section{Single operator approach}\label{app:single_op}

In \cref{sec:Algo}, we introduced the general many-operator approach to calculate the MBGF, starting from a set of initial operators $S_0$. We acknowledge that this general discussion is rather technical. To help the reader, we discuss here a special case of the more general approach, where the initial operator set $S_0$ corresponds to a single operator $A$. This appendix includes a simplified version of proof from \cref{app:cont_frac_many} for this special case.
In \cref{app:single_op_foliation}, we show the Hilbert space foliation generated by a single operator $A$, followed by a proof of the continued fraction approximation to the single operator Green's function in \cref{app:single_op_cf}.

\subsection{Foliation from a single operator $A$}\label{app:single_op_foliation}

A time-evolved operator $A(t)=e^{iHt}Ae^{-iHt}$ in the Heisenberg representation satisfies
$\frac{dA(t)}{dt}=i\mathcal{L}_HA(t)$.
This operator can be split in the components parallel to the subspace containing $A$ and its complement using the projection operator 
\begin{equation}
    P_A(X)=\frac{{\rm Tr}(\rho XA^\dagger)}{{\rm Tr}(\rho AA^\dagger)}A,
\end{equation}
which satisfies $P^2_A(X)=P_A(X)$. This projector is Hermitian with respect with the inner product $(X,Y)={\rm Tr}(\rho XY^\dagger)$ as
\be
(P_A(X),Y)={\rm Tr}\left(\frac{{\rm Tr}(\rho X A^\dagger)}{{\rm Tr}(\rho A A^\dagger)}\rho AY^\dagger\right)={\rm Tr}\left(\rho X A^\dagger\frac{{\rm Tr}(\rho AY^\dagger)}{{\rm Tr}(\rho A A^\dagger)}\right)={\rm Tr}(\rho X (P_A(Y))^\dagger)=(X,P_A(Y)).
\ee
The projector on the complement of $A$ is $Q_A = 1-P_A$. The evolution equation for the orthogonal component $Q_{A}(A(t))$ is
\begin{eqnarray}
\frac{d}{dt}Q_{A}(A(t))	=iQ_{A}\mathcal{L}_H(Q_{A}(A(t)))+i\frac{{\rm Tr}(\rho A(t)A^\dagger)}{{\rm Tr}(AA^\dagger)}Q_{A}\mathcal{L}_H(A),
\end{eqnarray}
which is simply obtained projecting the original evolution equation $\frac{dA(t)}{dt}=i\mathcal{L}_H A(t)$  in the space orthogonal to $A$.
This equation can be formally solved as
\be
Q_{A}A(t)=i\int_{0}^{t}d\tau\frac{{\rm Tr}(\rho A(\tau)A^\dagger)}{{\rm Tr}(AA^\dagger)} e^{i(t-\tau)Q_{A}\mathcal{L}_H}Q_{A}\mathcal{L}_H(A).
\ee
We see that to obtain the time evolution of operators in the orthogonal complement of $A$, we need the time evolution of the new operators $A^{(1)}(t)\equiv e^{itQ_{A}\mathcal{L}_H}Q_{A}(\mathcal{L}_H(A))$.
Note that the $t=0$ operator $A^{(1)}(0)=Q_{A}(\mathcal{L}_H(A))$ belongs to the orthogonal complement of $A$ and the operator $Q_{A}\mathcal{L}_H$ keeps the evolution in that orthogonal complement. We can repeat the previous discussion, now starting with the time evolution of $A^{(1)}(t)$.
Defining $Q_0=Q_A$ and $Q_1=Q_{A^{(1)}} $ we find
\be
\frac{d}{dt}Q_1 A^{(1)}(t)=iQ_1Q_0\mathcal{L}_H(Q_1 A^{(1)}(t))+i\frac{{\rm Tr}(\rho A^{(1)}(t)A^{(1)\dagger})}{{\rm Tr}(\rho A^{(1)}A^{(1)\dagger})}Q_1Q_0\mathcal{L}_H A^{(1)},
\ee
or equivalently
\be
Q_1 A^{(1)}(t)=i\int_{0}^{t}d\tau \frac{{\rm Tr}(\rho A^{(1)}(t)A^{(1)\dagger})}{{\rm Tr}(\rho A^{(1)}A^{(1)\dagger})} e^{i(t-\tau)Q_1Q_0\mathcal{L}_H}Q_1Q_{0}\mathcal{L}_H A^{(1)}.
\ee
Repeating this procedure $k-$times splits the Hilbert space into the orthogonal vector spaces spanned by operators $(A,A^{(1)},\dots,A^{(k)})$ and their complement, where
\be
A^{(j)}=\mathcal{L}_jA^{(j-1)}=Q_{j-1}Q_{j-2}\dots Q_{1}Q_{0}\mathcal{L}_H A^{(j-1)}, \quad Q_{j}=1-P_{j},
\ee
and $P_j=P_{A^{(j)}}$ ($A^{(0)}\equiv A$). These vector spaces define a foliation of the Hilbert space. The evolution in each subspace is determined by the projected Liouvilian operator
\be
\mathcal{L}_j=Q_{j-1}Q_{j-2}\dots Q_{1}Q_{0}\mathcal{L}_H,
\ee
meaning that $A^{(j)}(t)\equiv e^{it\mathcal{L}_j}A^{(j)}$.

\subsection{Continued fraction representation of Green's function}\label{app:single_op_cf}

Based on the previous foliation, Mori \cite{Mori1965} constructed a continued fraction representation of the time correlation as follows. Inserting the equation for the $j-$level operators
\be
Q_jA^{(j)}(t)=i\int_0^t d\tau\frac{{\rm Tr}(\rho A^{(j)}(\tau)A^{(j)\dagger})}{{\rm Tr}(\rho A^{(j)}A^{(j)\dagger})}A^{(j+1)}(t-\tau)
\ee
in the equation for the evolution of operator $A^{(j)}(t)$, $\frac{d}{dt} A^{(j)}(t)=i\mathcal{L}_j A^{(j)}(t)$
\begin{align}
\frac{d}{dt} A^{(j)}(t)&=i\mathcal{L}_jP_j A^{(j)}(t)+i\mathcal{L}_jQ_j A^{(j)}(t)\\
&=i\frac{{\rm Tr}(\rho A^{(j)}(t)A^{(j)\dagger})}{{\rm Tr}(\rho A^{(j)}A^{(j)\dagger})}\mathcal{L}_j A^{(j)}-\int_0^t d\tau\frac{{\rm Tr}(\rho A^{(j)}(\tau)A^{(j)\dagger})}{{\rm Tr}(\rho A^{(j)}A^{(j)\dagger})}\mathcal{L}_j A^{(j+1)}(t-\tau),
\end{align}
and taking the trace with respect to $A^{(j)\dagger} \rho$  on both sides, we find
\begin{align}\label{eq:inner_prod}
&\frac{d}{dt} {\rm Tr}(\rho A^{(j)}(t)A^{(j)\dagger}) =i\frac{{\rm Tr}(\rho A^{(j)}(t)A^{(j)\dagger})}{{\rm Tr}(\rho A^{(j)}A^{(j)\dagger})} {\rm Tr}(\rho\mathcal{L}_j A^{(j)}A^{(j)\dagger})\\&-\int_0^t d\tau\frac{{\rm Tr}(\rho A^{(j)}(\tau)A^{(j)\dagger})}{{\rm Tr}(\rho A^{(j)}A^{(j)\dagger})} {\rm Tr}(\rho\mathcal{L}_jA^{(j+1)}(t-\tau)A^{(j)\dagger}).
\end{align}
Defining ${G}^{(j)}(t)={\rm Tr}(\rho A^{(j)}(t)A^{(j)\dagger})$ and the constants $\Delta^{(j)}={\rm Tr}(\rho\mathcal{L}_j A^{(j)}A^{(j)\dagger})$ and \newline $\Gamma^{(j)}={\rm Tr}(\rho A^{(j)}A^{(j)\dagger})$ \cref{eq:inner_prod} becomes
\begin{eqnarray}\label{eq:evol_G}
\frac{d}{dt}{G}^{(j)}(t)&=&i{G}^{(j)}(t)\frac{\Delta^{(j)}}{\Gamma^{(j)}}-\int_0^t d\tau  {G}^{(j)}(\tau)\frac{(\mathcal L_jA^{(j+1)}(t-\tau),A^{(j)})}{\Gamma^{(j)}}
\end{eqnarray}
The last term above can be rewritten using the hermiticity of the projection operators, as \newline $Q_{j-1}\dots Q_1Q_0=1-\sum_{m=0}^{j-1}P_m$ and $(P_m(X),Y)=(X,P_m(Y))$ so we have
\begin{eqnarray}
(\mathcal L_jA^{(j+1)}(t-\tau),A^{(j)})=\left(\mathcal L_HA^{(j+1)}(t-\tau),A^{(j)}\right),
\end{eqnarray}
So far, we have not constrained the state $\rho$ that defines the inner product. Assuming that $\rho$
is such that the Liouvilian $\mathcal{L}_H$ is Hermitian with respect to the inner product, we have finally
\begin{eqnarray}
\label{eq:Hermiticity_L}
(\mathcal L_jA^{(j+1)}(t-\tau),A^{(j)})=\left(A^{(j+1)}(t-\tau),\mathcal L_HA^{(j)}\right)=(A^{(j+1)}(t-\tau),A^{(j+1)})=G^{(j+1)}(t-\tau),
\end{eqnarray}
where the second equality follows because $A^{j+1}(t-\tau)$ is already orthogonal to the subspace $1- \sum_{m=0}^j P_m$. Using Eq.~\eqref{eq:Hermiticity_L} in Eq.~\eqref{eq:evol_G} leads to
\begin{eqnarray}\label{eq:evol_G2}
\frac{d}{dt}{G}^{(j)}(t)&=&i{G}^{(j)}(t)\frac{\Delta^{(j)}}{\Gamma^{(j)}}-\int_0^t d\tau  {G}^{(j)}(\tau)\frac{G^{(j+1)}(t-\tau)}{\Gamma^{(j)}}.
\end{eqnarray}
Introducing the Laplace transform (defined for ${\rm Im}(\omega)>0$) $\mathcal{G}^{(j)}(\omega)=\int_0^\infty dt e^{i\omega t}G^{(j)}(t)$
we finally arrive at
\begin{equation}\label{eq:cont_frac}
    i\mathcal{G}^{(j)}(\omega)=\frac{-(\Gamma^{(j)})^{2}}{\Gamma^{(j)}\omega+\Delta^{(j)}+i\mathcal{G}^{(j+1)}(\omega)},
\end{equation}
which corresponds to the continued fraction representation of $\mathcal{G}^{(j)}(\omega)$.
Note, however, that in the continued fraction formalism, the weight $w$ of the operators $\{\Gamma^{(j)}, \Delta^{(j)}\}$ increases with the continued fraction iteration, thus the number of measurements that need to be performed scale exponentially with the number of continued fraction iterations but remains constant with the system size. Furthermore, since the continued fraction approximation has an error that decreases exponentially with the number of iterations, we argue that the algorithm is efficient in terms of measurements.
\section{Many operator approach}

In this appendix, we prove \cref{theo:CFM} in the main text for approximating the MBGFs using the matrix continued fraction expansion. 

\subsection{Proof of \cref{theo:CFM}} \label{app:cont_frac_many}

Based on the Hilbert space foliation introduced in \cref{sec:CFM}, a continued fraction representation of $\bm{\mathcal{G}}_{ij}(\omega)$ can be constructed as follows. Inserting the equation for the $j-$level operators
\begin{align}
Q_{j}A_{p}^{(j)}(t)&=i\sum_{k}\int_{0}^{t}d\tau{{\rm Tr}(\rho A_{p}^{(j)}(\tau)\hat{A}_{k}^{(j)\dagger})}e^{i(t-\tau)\mathcal{L}_{j+1}}\mathcal{L}_{j+1}\hat{A}_k^{(j)}\\&=i\sum_{k}\int_{0}^{t}d\tau{{\rm Tr}(\rho A_{p}^{(j)}(\tau)\hat{A}_{k}^{(j)\dagger})}A_k^{(j+1)}(t-\tau),
\end{align}
in the evolution equation for $\hat{A}_p^{(j)}(t)$, $\frac{d}{dt}\hat{A}_p^{(j)}(t)=i\mathcal{L}_{j}\hat{A}_p^{(j)}(t)$ we find
\begin{equation}
\frac{d}{dt}\hat{A}_{p}^{(j)}(t)=i\sum_{k}{{\rm Tr}(\rho \hat{A}_{p}^{(j)}(t)\hat{A}_{k}^{(j)\dagger})}\mathcal{L}_{j}\hat{A}_{k}^{(j)}-\sum_{k}\int_{0}^{t}d\tau{{\rm Tr}(\rho \hat{A}_{p}^{(j)}(\tau)\hat{A}_{k}^{(j)\dagger})}\mathcal{L}_{j}A_{k}^{(j+1)}(t-\tau).
\end{equation}
 Taking the trace with respect to $\hat{A}_{m}^{(j)\dagger}\rho$ on
both sides leads to
\begin{align}\label{eq:deriv_1}
\frac{d}{dt}{\rm Tr}(\rho \hat{A}_{p}^{(j)}(t)\hat{A}_{m}^{(j)\dagger})&=i\sum_{k}{{\rm Tr}(\rho \hat{A}_{p}^{(j)}(t)\hat{A}_{k}^{(j)\dagger})}{\rm Tr}(\rho\mathcal{L}_{j}\hat{A}_{k}^{(j)}\hat{A}_{m}^{(j)\dagger})\nonumber\\&-\sum_{k}\int_{0}^{t}d\tau{{\rm Tr}(\rho \hat{A}_{p}^{(j)}(\tau)\hat{A}_{k}^{(j)\dagger})}{\rm Tr}(\rho\mathcal{L}_{j}A_{k}^{(j+1)}(t-\tau)\hat{A}_{m}^{(j)\dagger}).
\end{align}
 Defining $G_{pm}^{(j)}(t)={\rm Tr}(\rho \hat{A}_{p}^{(j)}(t)\hat{A}_{m}^{(j)\dagger})$,
 $\Delta_{km}^{(j)}={\rm Tr}(\rho\mathcal{L}_{j}\hat{A}_{k}^{(j)}\hat{A}_{m}^{(j)\dagger})$, \cref{eq:deriv_1} becomes 
\begin{equation}
\frac{d}{dt}G_{pm}^{(j)}(t)=i\sum_{k}G_{pk}^{(j)}(t)\Delta_{km}^{(j)}-\sum_{k}\int_{0}^{t}d\tau G^{(j)}_{pk}(\tau){\rm Tr}(\rho\mathcal{L}_{j}A_{k}^{(j+1)}(t-\tau)\hat{A}_{m}^{(j)\dagger}).\label{eq:evol_G_many}
\end{equation}

The last term above can be rewritten using the hermiticity of the projection operators, as $Q_{j-1}\dots Q_{1}Q_{0}=1-\sum_{m=0}^{j-1}P_{m}$ where $P_m=\sum_k P_{A_j^{(m)}}$ and $(P_{m}(X),Y)=(X,P_{m}(Y))$ so we have 
\[
(\mathcal{L}_{j}A_{k}^{(j+1)}(t-\tau),\hat{A}_{m}^{(j)})=\left(\mathcal{L}_{H}A_{k}^{(j+1)}(t-\tau),\hat{A}_{m}^{(j)}\right).
\]

So far, we have not constrained the state $\rho$ that defines the
inner product. Assuming that $\rho$ is such that the Liouvilian $\mathcal{L}_{H}$
is Hermitian with respect to the inner product, we have finally 
\begin{eqnarray}
(\mathcal{L}_{j}A_{k}^{(j+1)}(t-\tau),\hat{A}_{m}^{(j)})=\left(A_{k}^{(j+1)}(t-\tau),\mathcal{L}_{H}\hat{A}_{m}^{(j)}\right)=(A_{k}^{(j+1)}(t-\tau),A_{m}^{(j+1)})\\
=\sum_{rs}(U^{\dagger(j+1)}\sqrt{D^{(j+1)}})_{kr}G_{rs}^{(j+1)}(t-\tau)(\sqrt{D^{(j+1)}}U^{(j+1)})_{sm},\label{eq:Hermiticity_L_matrix}
\end{eqnarray}
 where the second equality follows because $A_{k}^{j+1}(t-\tau)$
is already orthogonal to the subspace \newline $1-\sum_{m=0}^{j}P_{m}$. Using Eq.~\eqref{eq:Hermiticity_L_matrix} in \cref{eq:evol_G_many} leads to 
\begin{equation}
\frac{d}{dt}G_{pm}^{(j)}(t)=i\sum_{k}G_{pk}^{(j)}(t)\Delta_{km}^{(j)}-\sum_{krs}\int_{0}^{t}d\tau G_{pk}^{(j)}(\tau){[U^{(j)\dagger}\sqrt{D^{(j+1)}}]_{kr}}G_{rs}^{(j+1)}(t-\tau)[\sqrt{D^{(j+1)}}U^{(j+1)}]_{sm}.
\label{eq:evol_G_final}
\end{equation}

Introducing the Laplace transform (defined for ${\rm Im}(\omega)>0$)
$\mathcal{G}_{ab}^{(j)}(\omega)=\int_{0}^{\infty}dte^{i\omega t}G_{ab}^{(j)}(t)$ we find

\begin{equation}
- \delta_{pm}-i\omega\mathcal{G}^{(j)}_{pm}(\omega) =i\sum_{k}\mathcal{G}_{pk}^{(j)}(\omega)\Delta_{km}^{(j)}-\sum_{krs} \mathcal{G}^{(j)}_{pk}(\omega)[U^{(j)\dagger}\sqrt{D^{(j+1)}}]_{kr}\mathcal{G}_{rs}^{(j+1)}(\omega)[\sqrt{D^{(j+1)}}U^{(j+1)}]_{sm},
\label{eq:evol_Glap}
\end{equation}
which in matrix form becomes
\begin{equation}
i\bm{\mathcal{G}}^{(j)}(\omega)=- \left[\omega+\bm{\Delta}^{(j)} +\bm{M}^{\dagger(j+1)}i\bm{\mathcal{G}}^{(j+1)}(\omega)\bm{M}^{(j+1)}\right]^{-1},
\label{eq:evol_G_matrix}
\end{equation}
with $\bm{M}^{(j+1)}:= \sqrt{\bm{D}^{(j+1)}}\bm{U}^{(j+1)}$. This is the continued fraction representation of the matrix $(\bm{\mathcal{G}}^{(j)}(\omega))_{ab}:=\mathcal{G}_{ab}^{(j)}(\omega)$.

The recursion relation found in \cref{eq:evol_G_matrix} defines the continued fraction representation of $\bm{\mathcal{G}}^{(0)}$ as the infinite iterated map
$\bm{\mathcal{G}}^{(0)}(\omega)= \lim_{k\rightarrow\infty} \bm{\mathcal{G}}_k^{(0)}(\omega),$ where 
\begin{equation}\label{eq:mat_cont_frac_mob}
    \bm{\mathcal{G}}^{(0)}_k(\omega)=\bm{s}^\omega_0\circ \bm{s}_1^\omega \circ \dots \circ \bm{s}_{k-1}^\omega\circ \bm{s}_k^\omega(0)\quad \mbox{and} \quad \bm{s}^\omega_j(\bm{X} ):=
- \left[\omega+\bm{\Delta}^{(j)} +\bm{M}^{\dagger(j+1)}\bm{X}\bm{M}^{(j+1)}\right]^{-1}.
\end{equation}
Let's define the map $T_{\bm{S}}(\bm{X}):=(\bm{A}\bm{X}+\bm{B})(\bm{C}\bm{X}+\bm{D})^{-1}$, where $\bm{S}$ is given by 
$
   \bm{S}= \left(\begin{array}{cc}
    \bm{A}& \bm{B}\\
    \bm{C} & \bm{D}
    \end{array}\right).
$
It is easy to check that this map satisfies the composition rule
\begin{equation}\label{eq:comp_rule}
T_{\bm{S}_1}\circ T_{\bm{S}_2}(\bm{X}) := T_{\bm{S}_1}(T_{\bm{S}_2}(\bm{X}))=T_{\bm{S}_1\bm{S}_2}(\bm{X}). 
\end{equation}
Noting that $\bm{s}^\omega_j=T_{\bm{S}_j(\omega)\bm{R}_{j+1}}$ with 

\begin{equation}
\bm{S}_j(\omega)=\left(\begin{array}{cc}
   0  & -1 \\
   1  & \omega + \bm{\Delta}^{(j)}.
\end{array}\right) \quad \text{and} \quad\bm{R}_{j+1}=\left(\begin{array}{cc}
   \bm{M}^{\dagger(j+1)}  & 0 \\
   0  & (\bm{M}^{(j+1)})^{-1}
\end{array}\right),
\end{equation}
the MBGF $\bm{\mathcal{G}}^{(0)}$ can be cast as
\begin{equation}
    \bm{\mathcal{G}}^{(0)}(\omega)= \lim_{k\rightarrow\infty} T_{\left(\prod_{j=0}^{k-1}\bm{S}_j(\omega)\bm{R}_{j+1}\right)\bm{S}_k(\omega)}(0).
\end{equation}

\section{Measurement Strategy}\label{app:measurement}

In this appendix, we discuss the scaling of the number of measurements as a function of the continued fraction level for calculating the MBGF. Let's concentrate on the single-operator case for simplicity. In the continued fraction algorithm, we can classically compute the commutator $[H, A^{(j)}]$ that is needed to estimate the coefficients $\{\Gamma^{(j)}, \Delta^{(j)}\}$. After mapping into Paulis, the initial operator $A$ has support on $k_A$ sites, and we have a Hamiltonian where each term has support on most $k_H$ sites. After a single level, the operator $[H,A]$ will have maximum support on $k_A+k_H$ sites. The number of operators in the Pauli operator basis will also, generically, grow exponentially with the level. Using an efficient measurement scheme in terms of the number of operators to measure \cite{Huang_2020,Huang2021b}, the main cost will depend exponentially on the Pauli weight $w$, which will grow linearly with the level. If further constraints are imposed on the state and the operators being measured, this cost can be further reduced \cite{onorati2023efficient}.

In the many-operator case, the situation is different, e.g. already at the 0-th iteration, we can have that $\Tr(\rho A^0_i A^{0\dagger}_j)$ is spatially separated at opposite ends of the system. In the case of fermionic observables, this can lead to support across the entire system when a fermionic encoding to qubits is used. In this case, a different measurement strategy based on the fermionic shadow approach \cite{Zhao_2021} may be advantageous. The approach is based on evaluating $k$-RDM of $\rho$, which can be represented by a $2k$ index tensor:
\be
    ^k D^{p_1 \dots p_k}_{q_1 \dots q_k} = \Tr(a^{\dagger}_{p_1}\dots a^\dagger_{p_k} a_{q_k}\dots a_{q_1} \rho).
\ee
These $k$-RDMs exactly correspond to the structure of measurements when probing MBGFs of fermionic models. Using fermionic shadows, one can estimate all $k$-RDMs up to additive error $\epsilon$ using 
$M = \mathcal{O}({n \choose k} \frac{k^{3/2} \log(n)}{\epsilon^2})$ copies of state $\rho$.

To evaluate the cost of this approach, let us consider operators $A, B$ with fermionic operator support on $k_A$ and $k_B$ modes, respectively (in the examples that we consider with annihilation operators, they would simply equal 1). Furthermore, consider a fermionic Hamiltonian $H$ where each term has at most $d$ different modes. From here, we can establish the following relations. 
\be 
    D_{A,B} = \Tr(\rho A B^\dagger)
\ee
is at most a $(k_A + k_B)$-RDM. Contrary to the previous case, where we were looking for the support of the operators, here we look for the number of fermionic modes featured. One can easily see that $\comm{H}{A}$ will have terms with at most $k_A + d$ different modes. This allows to establish that: 
\be
    D_{H, A, B} = \Tr(\rho \comm{H}{A} B^\dagger)
\ee
is at most $(d+k_A+k_B)$-RDM.

Consider starting with a set of operators $\{A_0^{j}\}_{j=1}^{N}$, with the maximum number of modes $k_0$. To reach the next level of continued fractions, we need at most $2k_0+d$-RDM.Thus, to reach $n^{th}$ level of continued fractions, we need to calculate 
at most $(2k_0+(2n+1)d)$-RDM. Thus, the total cost for the measurements is:
\be
    C = \mathcal{O}\left( \sum_{x=1}^{2k_0+(2n+1)d}{N \choose x}x^{3/2}\log(N)/\epsilon^2 \right).
\ee

Note that although the required number of measurements still scales exponentially with the continued fraction level, this cost is independent of the fermionic encoding used, and it is expected to scale better than the Pauli measurement for Hamiltonians acting on few different modes but with geometrically nonlocal interactions.

\section{Extra Results}\label{app:results_app}

\subsection{Additional results}

In this appendix we include additional results for the approximation of the MBGF and its error in \cref{fig:fig1_u2} for the $N=4$ site Fermi-Hubbard model, with the set of initial operators $\{A_j\}_{j=1}^N = \{a_{j \uparrow}\}$ and $U/t = 2$ (complementing the result in \cref{fig:fig1_u4} for $U/t = 4$). In both cases, we investigate the agreement between the approximation obtained at order $n$ with continued fractions and the respective exact MBGF. 

\begin{figure}
    \centering
    \includegraphics[width = \linewidth]{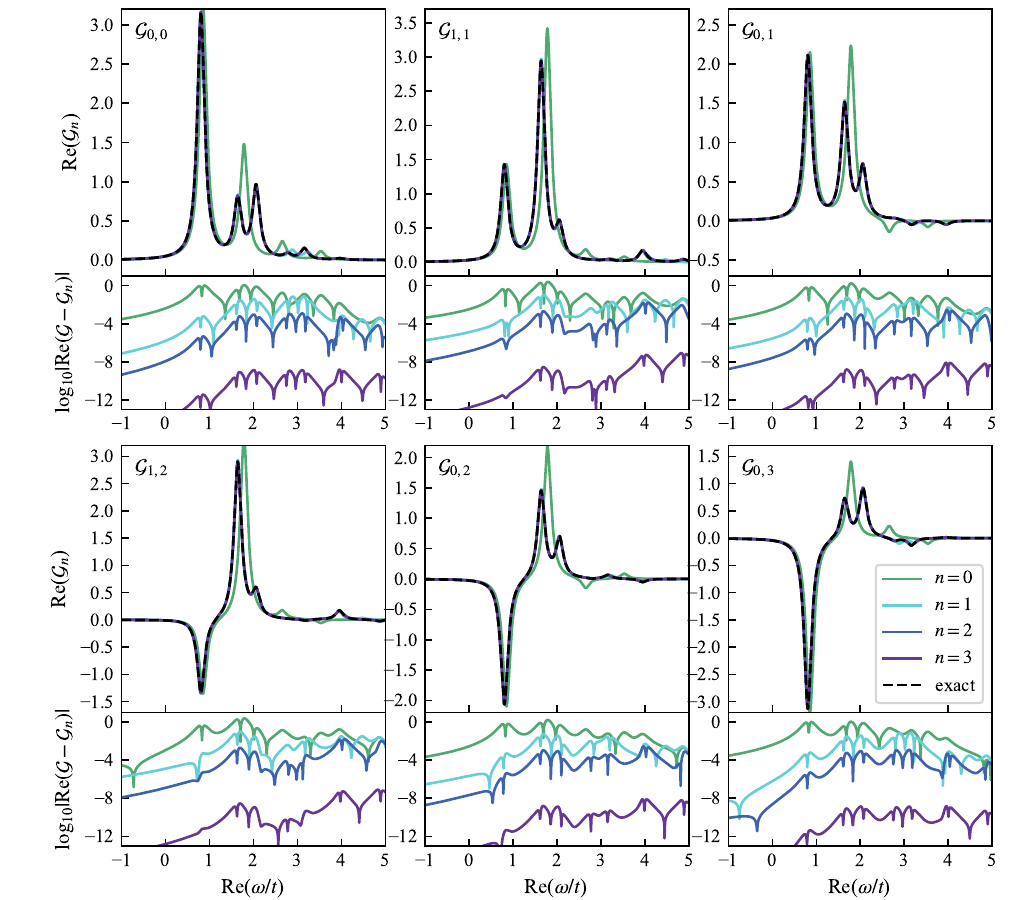}
    \caption{The approximation of the Fourier transformed Green's function obtained with the continued fraction algorithm at different levels $n$ for the 4-site Fermi-Hubbard model with $U/t = 2$ with the initial operator choice $S_0 = \{a_{j \uparrow}\}_{j=0}^{N-1}$. We plot the results for $\eta = 0.1$ and compare them with the MBGF obtained using exact diagonalization.}
    \label{fig:fig1_u2}
\end{figure}

In \cref{fig:VQE_n}, we evaluate the MBGF at the continued fraction level $n=3$ and for various VQE circuit depths $k$, based on the operator set $\{A_j\}_{j=1}^N = \{n_{j \uparrow}\}$ for $N=4$. One observes a good agreement with the exact MBGF for all VQE depth $k$ values.

\begin{figure}[ht!]
    \centering
    \includegraphics[width = \linewidth]{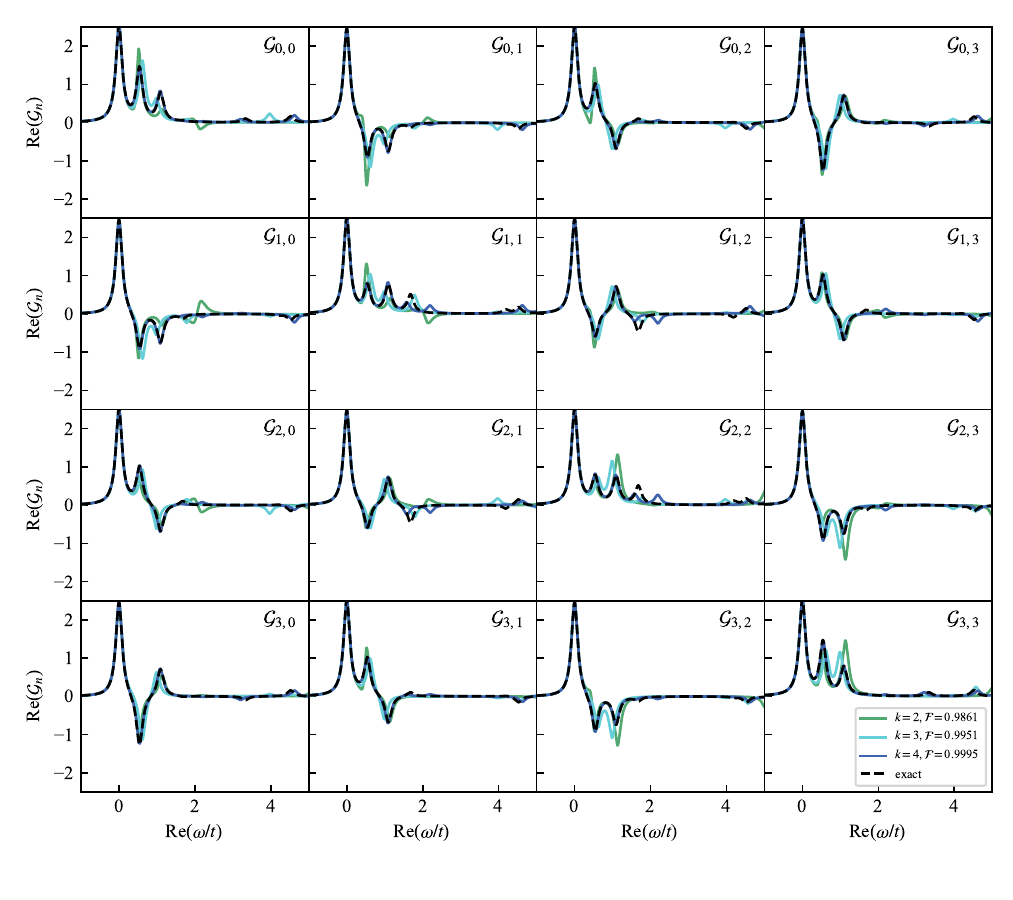}
    \caption{The approximation of the real part of the Fourier transformed Green's function obtained with the continued fraction algorithm for the 4-site Fermi-Hubbard model with $U/t = 4$ with the initial operator choice $S_0 = \{n_{j \uparrow}\}_{j=0}^{N-1}$ and the ground state corresponding to the best VQE state of depth $k$. We plot the results for $\eta = 0.1$ and compare them with the MBGF obtained using exact diagonalization. For each VQE depth $k$, we display the associated fidelity compared to the exact ground state. }
    \label{fig:VQE_n}
\end{figure}

\begin{figure}
    \centering
    \includegraphics[width = \linewidth]{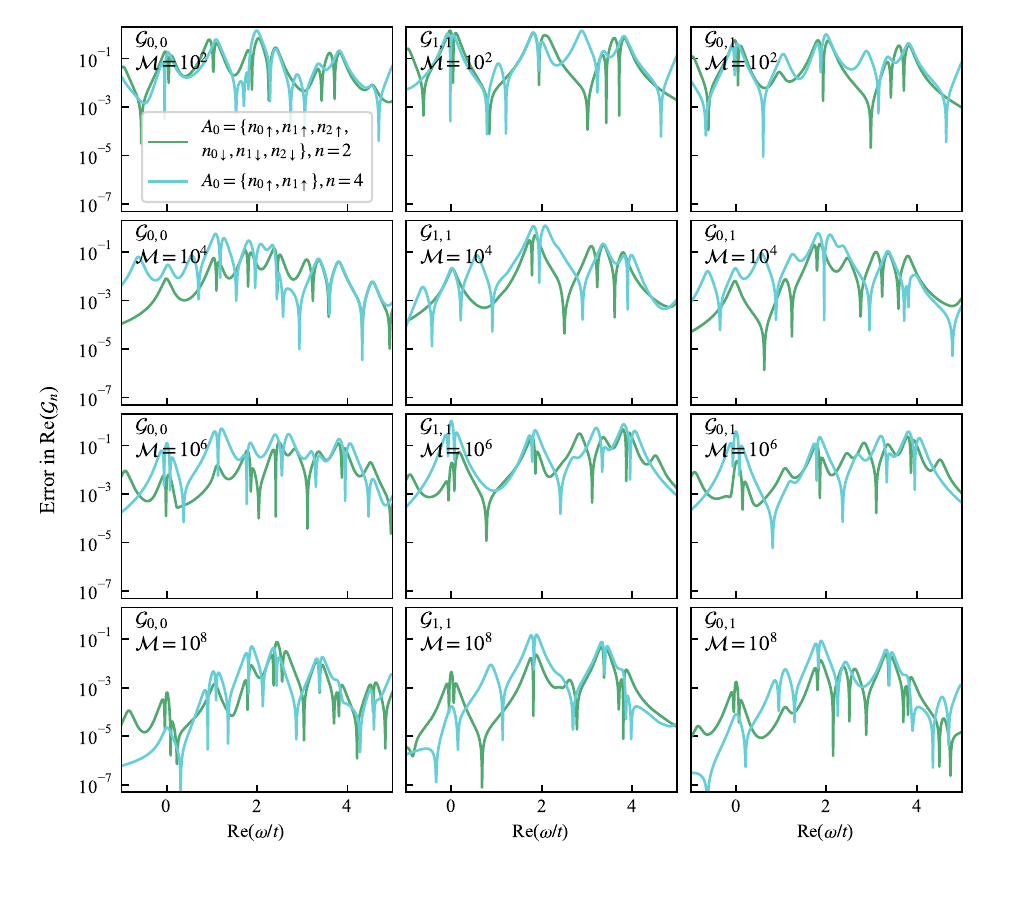}
    \caption{Approximation error of the $\Re{\mathcal{G}_{ab}}$ for the three site Fermi-Hubbard model with $U/t=2$. We choose two initial operator sets, $S=\{n_{i,\sigma}\}_{\forall i, \sigma}$ and $S'=\{n_{i\uparrow}\}_{i=0}^1$. The set $S$ is iterated two times $n=2$, while the set $S'$ is iterated 4 times $(n=4)$  In either case, the level $n$ is sufficient to obtain the exact solution. We investigate the response of the error to the statistical noise arising from the finite number of measurements $\mathcal{M}$ per each observable in the protocol. }
    \label{fig:app_densities}
\end{figure}

\subsection{Symmetry of the real part of $\mathcal{G}_{ab}(\omega)$}\label{sec:results_app}

In this section, we show that for any system commuting with an antiunitary operator $\mathcal T$, such that $\mathcal T^2=1$, the real part of the MBGF considered in this work is symmetric, i.e. $\Re(\calG_{ab}) = \Re(\calG_{ba})$. Assume $\rho = \sum_j \rho(E_j) \ket{E_j}\bra{E_j}$, where $|E_j\rangle$ are the eigenstates of $H$. Let's denote $c_{mj}^a:=\bra{E_m}A_a\ket{E_j}$ and split $\omega = \omega_0 + i\eta$ with $\eta>0$, then
\begin{align}\nonumber
    &\mathcal{G}_{ab}(\omega) = \int_0^{\infty} dt e^{i\omega t} \Tr(\rho A_a(t)A_b^{\dagger}) = \int_0^{\infty} dt \sum_{mj} \rho(E_m) e^{i(E_m-E_j+\omega)t}\bra{E_m}  A_a\ket{E_j}\bra{E_j}A_b^\dagger\ket{E_m},\\&=i\sum_{mj}\frac{\rho(E_m) c_{mj}^a c_{mj}^{b*}}{\omega +E_m -E_j}=\sum_{mj}\frac{\rho(E_m) c_{mj}^a c_{mj}^{b*}}{(\omega_0 +E_m -E_j)^2+\eta^2}(\eta+i(\omega_0 +E_m -E_j))
\end{align}
such that
\begin{align}
    \Re(\mathcal{G}_{ab}(\omega)) = \frac{\sum_{mj}\rho(E_m) }{(\omega_0 +E_m -E_j)^2+\eta^2}(\Re(c_{mj}^a c_{mj}^{b*})\eta-\Im(c_{mj}^a c_{mj}^{b*})(\omega_0 +E_m -E_j)).
\end{align}
To show that $\Re(\mathcal{G}_{ab}(\omega))) = \Re(\mathcal{G}_{ba}(\omega))$, we need to show that $\Im(c_{mj}^a c_{mj}^{b*}) = 0$. For $[H,\mathcal{T}]=0$, the non-degenerate eigenstates of $H$ satisfy $\mathcal{T}|E_j\rangle=e^{i\alpha}|E_j\rangle$. So it is always possible to choose the non-degenerate $|\tilde{E}_j\rangle$ such that $\mathcal{T}|\tilde{E}_j\rangle=|\tilde{E}_j\rangle$, by redefining $|\tilde{E}_j\rangle :=e^{i\alpha}|E_j\rangle$ and using the antiunitarity of $\mathcal{T}$. Any antiunitary operator can be expressed as complex conjugation $\mathcal{K}$ times a unitary operator. The model considered in this work commutes with $\mathcal{K}$ so, its eigenvectors can be chosen to be real. From this observation, it is evident that any coefficient $\bra{E_i}A\ket{E_j}$ is real for the operators $A = a_{i\sigma}$ or $A = n_{i\sigma}$ that we consider. This implies that $\Im(c_{mj}^ac_{mj}^{b*})=0$ making $\Re(\calG_{ab}) = \Re(\calG_{ba})$.

\bibliographystyle{quantum}
\bibliography{main}

\end{document}